\documentclass[11pt, a4paper]{article}

\usepackage[french, english]{babel}
\usepackage[utf8]{inputenc}
\usepackage[TS1,T1]{fontenc}

%!TEX encoding = UTF-8 Unicode
%!TEX spellcheck = English

%%%%%%%%%%%%
% Packages %
%%%%%%%%%%%%

\usepackage{amsmath,amssymb,amsthm,amsfonts}    % for theorems, remarks, proofs, etc. Possible conflicts with the package wrapfigure.
\usepackage[T1]{fontenc}
\usepackage{appendix}                           % appendix
\setlength\parindent{0pt}
\usepackage{color}                              % for colors
\usepackage{enumerate}
\usepackage{enumitem}                           % for the \setlist{nolistsep} option
\setlist{nolistsep}

% Uncomment the first line and comment the second to see the margin notes
%\usepackage[a4paper,tmargin=3cm,bmargin=3cm,rmargin=4cm,lmargin=4cm]{geometry}
\usepackage[a4paper,tmargin=3cm,bmargin=3cm,rmargin=2.5cm,lmargin=2.5cm]{geometry}

%%% Header/Footer style
\usepackage{fancyhdr}
\fancypagestyle{plain}{%
\pagestyle{fancy}

\lhead{\nouppercase{\rightorleftmark}}

\fancyhead[C]{} 
\fancyhead[L]{}
\fancyhead[R]{}

\fancyfoot[C]{\textbf{\thepage}} 
\fancyfoot[L]{}
\fancyfoot[R]{}
}
\pagestyle{plain}

\usepackage{float}

\usepackage[bookmarksnumbered,pdftex]{hyperref}
\hypersetup{
    bookmarks=true,         % show bookmarks bar?
    unicode=false,          % non-Latin characters in Acrobat’s bookmarks
    pdftoolbar=true,        % show Acrobat’s toolbar?
    pdfmenubar=true,        % show Acrobat’s menu?
    pdffitwindow=false,     % window fit to page when opened
    pdfstartview={FitH},    % fits the width of the page to the window
    pdftitle={Examples of infinite direct sums of spectral triples},
    pdfauthor={Kevin Falk},
    pdfsubject={},
    pdfcreator={Kevin Falk},
    pdfproducer={Kevin Falk},
    pdfkeywords={Noncommutative geometry, spectral triples, self-similar sets}
    pdfnewwindow=false,     % links in new PDF window
    colorlinks=true,        % false: boxed links; true: colored links
    linkcolor=blue,         % color of internal links (change box color with linkbordercolor)
%     linkbordercolor={1 0 0},
    citecolor=magenta,      % color of links to bibliography
    filecolor=magenta,      % color of file links
    urlcolor=cyan           % color of external 
%     pdfdisplaydoctitle=false
}

\usepackage{nameref}
\usepackage{mathrsfs}                           % for \mathscr
\usepackage{mathtools}                          % 
\usepackage{minibox}                            % for minibox in the comments
\usepackage[nottoc, notlof, notlot]{tocbibind}  %
\usepackage{slashed}                            % used for \slashed{D}
\usepackage{supertabular}                       % tabular on several pages
\usepackage{tabularx}                           % for long tabulars
\usepackage{subcaption}                         % for subfigures

\usepackage{tikz}
\usetikzlibrary{matrix}
\usetikzlibrary{lindenmayersystems} % for the fractal
\usetikzlibrary{decorations}

\usepackage{xcolor}
\usepackage[outline]{contour}
\contourlength{1.7pt}

%%%%%%%%%%%%%%%%%%%
% End of packages %
%%%%%%%%%%%%%%%%%%%

%%%%%%%%%%%%
% Commands %
%%%%%%%%%%%%

                  % 
               %
\newcommand{\A}{\mathcal{A}}                                    % algebra A
\newcommand{\abs}[1]{\vert  #1 \vert}                           % |x|
                % {.,.}
                            % antiholomorphic tangent space
                                    % group of automorphism
\newcommand{\B}{\mathbb{B}}                                     % Unit ball
                                   % Bounded operators
   % Unit operator
                                    % Berezin transform
                                    % Berezin symbol
                                  % bra <.|
                           % braket <.|.>
                                         % A^2
\newcommand{\C}{\mathbb{C}}                                     % complex numbers C
                                  % cardinal
                                      % Clifford algebra
\newcommand{\commut}[2]{[\, #1,\,#2 \,] }                           % [.,.]
                                          % adel
                                  % curvature
\newcommand{\Coo}{C^\infty}                                     % set of smooth functions
                                        % constant in the Fock space
\newcommand{\cv}{=\vcentcolon }                                 % =:
                                     % Unit disc of C
                         % d bar b operator
\newcommand{\DD}{\mathcal{D}}                                   % Dirac operator
                                  % group of diffeomorphisms
\renewcommand{\dim}{{\rm dim}}                                  % dimension
\newcommand{\dom}{{\rm dom}}                                    % domain
                                    % envelopping algebra
                                    % set of endomorphisms
\newcommand{\eps}{\varepsilon}                                  % short for \varepsilon
\newcommand{\evalat}[1]{|_{#1}}                                 % evaluate at {}
\newcommand{\fat}[1]{\boldsymbol{#1}}                           % boldsymbol
\newcommand{\fatR}[0]{\fat{\mathcal{R}}}                                  % bold R
\newcommand{\fatT}[0]{{\bf T}}                                  % bold T for the Toeplitz of Bergman type
                        % bold Pi
                                 % Gelfand transform
                                  % Generator Q_j of Heisenberg
                                  % Generator P_j of Heisenberg
                                  % Generator T of Heisenberg
                                  % grad
                                    % GLS
                                    % GTO
\renewcommand{\H}{\mathcal{H}}                                  % Hilbert space H
                                          % H^2
                                  % Hermite operators
                                  % complex Hessian
\newcommand{\hideqed}{\renewcommand{\qed}{}}                    % to suppress \qed
                                % group of homeomorphisms
                              % holomorphic tangent space
                                    % Im
                                    % compact operators
\newcommand{\Ker}{{\rm Ker}}                                    % kernel
                                  % ket |.>
                                  % bounded operators
                                        % newline
                                     % set of bounded functions
                                      % Monge--Ampère class
                                    % set of matrices
                                 % index of the weight in the Fock space
\newcommand{\noNull}{\backslash\{0\}}                           % \{0}
\newcommand{\N}{\mathbb{N}}                                     % integers N
\newcommand{\norm}[1]{\lVert \, #1 \, \rVert}                           % ||.||
                           % ||.||
                                    % order
\newcommand{\omegaj}{{\omega j}}				% \omega j
                                    % polarization
                                        % Poisson extension operator
                              % set of pseudodifferential operators
                     % set of log-pseudodifferential operators
\newcommand{\Pol}{{\rm Pol}}                                    % set of polynomials
                                     % rational numbers Q
\newcommand{\R}{\mathbb{R}}                                     % real numbers R
\newcommand{\Ran}{{\rm Ran}}                                    % range
                                    % Re
                                  % Representation on Bergman
                                % Representation on Bergman on the ball
                                    % Representation on Fock
                                    % Representation on Hardy
                                    % Schrodinger representation 
                    % residue
                                   % Unit sphere
                                  % the Dirac operator
\newcommand{\scalp}[2]{\langle {\,#1 \, ,\,#2\,} \rangle}           % <.,.>
                 % (.,.)
                             % Schwartz space
                  % Shift operator 
                          % smoothing operators
                                      % SO
                                  % span
\newcommand{\Spec}{{\rm Spec}}                                  % Spec()
                                  % Spin
                                   % spectral triple
                                    % algebra of Hardy-Toeplitz operators
                             % algebra of Bergman-Toeplitz operators
                            % order of a GTO
\newcommand{\Tr}{{\rm Tr}}                                      % Trace Tr
                                  % Dixmier trace
                                      % Trace tr
                                   % Trace operator
                                % Trace operator m
                        % symbol of a GTO
\newcommand{\vc}{\vcentcolon =}                                 % :=
                                    % Vol
                                    % vol
                                  % Wres
                                     % Relative integers

% References
\newcommand{\appref}[1]{\hyperref[#1]{Appendix \ref{#1}}}                       % Appendix X
\newcommand{\defref}[1]{\hyperref[#1]{Definition \ref{#1}}}                     % Definition X
\newcommand{\chapref}[1]{\hyperref[#1]{Chapter \ref{#1}}}                       % Chapter X
\newcommand{\condref}[1]{\hyperref[#1]{Condition \ref{#1}}}                     % Condition X
                   	% Conditions X
\newcommand{\condreff}[2]{\hyperref[#1]{Condition \ref{#1}, \ref{#2}}}           % Condition X,Y
\newcommand{\corref}[1]{\hyperref[#1]{Corollary \ref{#1}}}                      % Corollary X
\newcommand{\exref}[1]{\hyperref[#1]{Example \ref{#1}}}                         % Example X
\newcommand{\figref}[1]{\hyperref[#1]{Figure \ref{#1}}}                         % Figure X
\newcommand{\lemref}[1]{\hyperref[#1]{Lemma \ref{#1}}}                          % Lemma X
                            % p. X
\newcommand{\propref}[1]{\hyperref[#1]{Proposition \ref{#1}}}                   % Proposition X
\newcommand{\propGTOref}[1]{(\hyperref[#1]{{\autoref{#1}}})}                    % (P X)
\newcommand{\remref}[1]{\hyperref[#1]{Remark \ref{#1}}}                         % Remark X
\newcommand{\remsref}[1]{\hyperref[#1]{Remarks \ref{#1}}}                       % Remarks X
\newcommand{\secref}[1]{\hyperref[#1]{Section \ref{#1}}}                        % Section X
\newcommand{\theoref}[1]{\hyperref[#1]{Theorem \ref{#1}}}                       % Theorem X

\DeclareMathSymbol{\Kappa}{\mathalpha}{operators}{"4B}

\newtheorem{theorem}{Theorem}[section]

\newtheorem{definition}[theorem]{Definition}
\newtheorem{lemma}[theorem]{Lemma}
\newtheorem{proposition}[theorem]{Proposition}
\newtheorem{example}[theorem]{Example}

\newtheorem{remark}[theorem]{Remark}

\newcounter{mnotecount}[section]
\renewcommand{\themnotecount}{\thesection.\arabic{mnotecount}}
\newcommand{\mnote}[1]%{}
{\protect{\stepcounter{mnotecount}}$^{\mbox{\footnotesize $\bullet$\themnotecount}}$ \marginpar{\raggedright\tiny\em$\!\!\!\!\!\!\,\bullet$\themnotecount: #1} }

\setlength{\parskip}{12pt}

\pgfdeclaredecoration{recsierpinski2}{do}{%
    \state{do}[width=\pgfdecoratedinputsegmentlength, next state=do]{%
 	\pgfpathmoveto{\pgfpointpolar{0}{0}}%
 	\pgfpathlineto{\pgfpointpolar{0}{\pgfdecoratedinputsegmentlength}}%
 	\pgfpathmoveto{\pgfpointpolar{0}{0}}%
        \pgfpathmoveto{\pgfpointpolar{0}{\pgfdecoratedinputsegmentlength/2}}%
        \pgfpathlineto{\pgfpointpolar{-60}{\pgfdecoratedinputsegmentlength/2}}%
        \pgfpathlineto{\pgfpointpolar{-30}{\pgfdecoratedinputsegmentlength/2*1.732050}}%
        \pgfpathlineto{\pgfpointpolar{0}{\pgfdecoratedinputsegmentlength/2}}%
        %\pgfpathclose%
    }
}

%%%%%%%%%%%%%%%%%%%
% End of commands %
%%%%%%%%%%%%%%%%%%%

\title{Examples of infinite direct sums of spectral triples}
\author{Kevin Falk}
\date{}

%%%%%%%%%%%%
% Document %
%%%%%%%%%%%%

\begin{document}
\maketitle

  \begin{center}
    Centre de Physique Théorique,\\
    Aix Marseille Université \& Université de Toulon \&  CNRS UMR 7332,\\
    13288 Marseille, France
  \end{center}

  \begin{abstract}
    We study two ways of summing an infinite family of noncommutative spectral triples. First, we propose a definition of the integration of spectral triples and give an example using algebras of Toeplitz operators acting on weighted Bergman spaces over the unit ball of $\C^n$. Secondly, we construct a spectral triple associated to a general polygonal self-similar set in $\C$ using algebras of Toeplitz operators on Hardy spaces. In this case, we show that we can recover the Hausdorff dimension of the fractal set.
  \end{abstract}
  
  \emph{Keywords:} noncommutative geometry, spectral triples, Toeplitz operators, self-similar sets.

  \section{Introduction and motivation}
    
    The main idea of Connes's noncommutative geometry is to characterize the geometry of a space in the language of algebras \cite{C1994}. We know for instance that a compact Hausdorff space can be equivalently seen as the \emph{commutative} $C^*$-algebra of continuous functions living on it. By analogy, a \emph{noncommutative} algebra would correspond to a space of quantum nature: a \emph{noncommutative space}. More precisely, the algebraic description of a Riemannian manifold is based on the notion of \emph{unital spectral triple}, consisting of the data $(\A,\H,\DD)$, where $\A$ is an involutive unital $*$-algebra $\A$ faithfully represented  on a Hilbert space $\H$ via a representation $\pi$, and $\DD$ is a selfadjoint operator acting on $\H$ with compact resolvent and such that for any $a\in\A$, $\pi(a)$ maps $\dom(\DD)$ into itself, and $\commut{\DD}{\pi(a)}$ extends to a bounded operator on $\H$. When $\A$ is not unital, replace the compactness of the resolvent by the compactness of $\pi(a)(\DD - \lambda )^{-1}$ for any $a\in\A$ and $\lambda\notin \Spec(\DD)$: the induced triple is then called \emph{nonunital}.
    Among the various geometric entities which are encoded in the spectrum of $\DD$, we are interested in the so-called \emph{spectral dimension}, defined as the quantity 
    \begin{align*}
      d \vc \inf\{ s \in \R\,, \Tr \,\abs{\DD}^{-s} < +\infty\}\,.
    \end{align*}
    As easily checked, the direct sum of a finite number of spectral triples is again a spectral triple. We are interested here in \emph{integrations} of spectral triples which consist, roughly speaking, of the direct sum of an \emph{infinite} number of spectral triples. Such constructions have already been encountered in \cite{EFI2012}: the spectral triple related to the Berezin--Toeplitz quantization over a smoothly bounded stricly pseudoconvex domain of $\C^n$ can be viewed as the integration of an infinite family of spectral triples based on algebras generated by Toeplitz operators acting on weighted Bergman spaces.
    
    The first idea is the following: given a countable family of spectral triples $(\A_m,\H_m,\DD_m)_{m\in\N}$ (commutative or not), the corresponding infinite direct sum ``$\bigoplus_{m\in\N}(\A_m,\H_m,\DD_m)$'', might not be necessarily a spectral triple again. Indeed, as $m$ tends to infinity, the boundedness of the representations of $\A_m$, the boundedness of the commutator between $\A_m$ and the operators $\DD_m$, or the compactness of the resolvent of the direct sum of all operators $\DD_m$ is hard to control in general and the sum may fail to converge. In order to control the behaviour of the operators $\DD_m$, we multiply them by some coefficients $\alpha_m \in\R\noNull$.
    
    Surprisingly, a strong link exists between direct summations of spectral triples and fractal sets, but before describing the second approach, let us recall some previous results on the topic. Since the works of A. Connes \cite[Chapter 4, 3.$\eps$]{C1994}, we know that noncommutative geometry can detect the topology of fractal sets: it is shown that a commutative spectral triple involving the $C^*$-algebra of continuous functions over the Cantor set can be used to recover its Hausdorff dimension and the Hausdorff measure. Later on, D. Guido and T. Isola proposed a commutative spectral triple, also based on a discrete approximation of the fractal, and extend Connes' result to more general self-similar sets in $\R^n$ \cite[Chapter 7]{GI2001}, \cite{GI2003} (the existence of such spectral triples was already conjectured in M. Lapidus' paper \cite{L1994}). See also \cite{L1997} for a review of open problems and questions about the links between analysis and spectral geometry on fractal sets. 
    
    In the latter works, each spectral triple is directly built over the fractal set. The approach we follow in the present paper is a constructive one: decompose the considered fractal set as the union of an infinite number of subdomains and associate to each of them a spectral triple. The spectral triple over the whole fractal set is obtained after the direct summation of all these spectral triples. This construction has already been used in \cite{CIL2008,CGIS2014,LS2014} to recover the Hausdorff dimension and the metric on $p$-summable infinite trees and the Sierpinski gasket, and also in \cite{CIS2012} to study the Hausdorff dimension of the Sierpinski gasket (and pyramid), its metric and describe its K-homology group.
    
    For simplicity reasons, we restrict our study to self-similar sets $E$ of the plane $\C$ which can be expressed as
    \begin{align}
      \label{eqE}
      E = \overline{E_0 \cup \bigcup_{k=1}^NF_k(E_0) \cup \bigcup_{k,l=1}^NF_k\circ F_l(E_0) \cup \dots}\,,
    \end{align}
    where the overline means taking the closure, $E_0$ is a polygonal Jordan curve in the complex plane or the unit disk, and $(F_k)_{k=1,\dots,N}$ is a finite family of contracting similarities. 
    
    The paper is organized as follows.\\
    We present in \secref{SecAbstractIntegration} some sufficient conditions for the sum to be a spectral triple and we give an example of such integration using Toeplitz operators over the unit ball of $\C^n$.\\ 
    We show in \secref{secFractalIntegration} that is is possible to build a \emph{noncommutative} spectral triple over such sets, involving algebras of Toeplitz operators, and whose spectral dimension corresponds to the Hausdorff dimension of $E$.

  \section{Abstract integration of spectral triples}
    \label{SecAbstractIntegration}
    
    \subsection{Conditions of integrability}

      \begin{lemma}
        \label{lemDoplusEssSelfAdjoint}
        Let $(\H_m)_{m\in\N}$ be a family of Hilbert spaces, $(\DD_m)_{m\in\N}$ be a family of unbounded selfadjoint operators with corresponding dense domains $(\dom(\DD_m) \subset \H_m)_{m\in\N}$, and $(\alpha_m)_{m\in\N} \in (\R\noNull)^\N$. Let $\DD^\oplus \vc \bigoplus_{m\in\N} \alpha_m\DD_m$ with domain
        \begin{align*}
          \dom(\DD^\oplus) \vc \big\{\bigoplus_{m=0}^N v_m \in \H^\oplus\,, N\in\N\,, v_m\in\dom(\DD_m)\, \big\}\,.
        \end{align*}
        Then $\DD^\oplus$ is essentially selfadjoint, with selfadjoint extension $\overline{\DD^\oplus}$.
      \end{lemma}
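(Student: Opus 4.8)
The plan is to apply the standard \emph{basic criterion} for essential selfadjointness: a densely defined symmetric operator $T$ on a Hilbert space is essentially selfadjoint if and only if $\Ran(T+i)$ and $\Ran(T-i)$ are both dense, equivalently $\Ker(T^*-i)=\Ker(T^*+i)=\{0\}$. So the proof splits into three routine verifications: $\DD^\oplus$ is densely defined, $\DD^\oplus$ is symmetric, and $\Ran(\DD^\oplus\pm i)$ is dense in $\H^\oplus\vc\bigoplus_{m\in\N}\H_m$.

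Density of $\dom(\DD^\oplus)$ is immediate: each $\dom(\DD_m)$ is dense in $\H_m$, hence the finitely supported sums $\bigoplus_{m=0}^N v_m$ with $v_m\in\dom(\DD_m)$ are dense in $\H^\oplus$. Symmetry is a one-line computation using that all the sums involved are \emph{finite}: for $v=\bigoplus_{m=0}^N v_m$ and $w=\bigoplus_{m=0}^M w_m$ in $\dom(\DD^\oplus)$,
\begin{align*}
  \scalp{\DD^\oplus v}{w} = \sum_{m} \alpha_m \scalp{\DD_m v_m}{w_m} = \sum_m \alpha_m \scalp{v_m}{\DD_m w_m} = \scalp{v}{\DD^\oplus w}\,,
\end{align*}
where we used that each $\DD_m$ is selfadjoint, hence symmetric, and that $\alpha_m\in\R$.

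For the ranges, fix a sign and note that, since $\alpha_m$ is a nonzero real scalar, $\alpha_m\DD_m$ is itself selfadjoint on $\H_m$ with domain $\dom(\DD_m)$; therefore $\alpha_m\DD_m\pm i$ is a bijection of $\dom(\DD_m)$ onto $\H_m$. Given any finitely supported $\xi=\bigoplus_{m=0}^N\xi_m\in\H^\oplus$, choose for each $m\le N$ the unique $v_m\in\dom(\DD_m)$ with $(\alpha_m\DD_m\pm i)v_m=\xi_m$; then $v\vc\bigoplus_{m=0}^N v_m\in\dom(\DD^\oplus)$ and $(\DD^\oplus\pm i)v=\xi$. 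Hence $\Ran(\DD^\oplus\pm i)$ contains all finitely supported vectors, which are dense in $\H^\oplus$, so it is dense. By the basic criterion $\DD^\oplus$ is essentially selfadjoint, and its closure $\overline{\DD^\oplus}$ is its unique selfadjoint extension; one checks moreover that $\dom(\overline{\DD^\oplus})=\{\bigoplus_{m}v_m\in\H^\oplus : v_m\in\dom(\DD_m),\ \sum_m\norm{\alpha_m\DD_m v_m}^2<\infty\}$, the truncations $\bigoplus_{m=0}^N v_m$ providing the approximating sequence.

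I do not expect a genuine obstacle in this lemma; the only points deserving attention are to invoke the right equivalent form of essential selfadjointness and, throughout the symmetry and surjectivity arguments, to exploit that elements of $\dom(\DD^\oplus)$ have only finitely many nonzero components, so that no convergence issue arises before passing to the closure.
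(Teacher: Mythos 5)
Your proposal is correct and follows essentially the same route as the paper: verify that $\DD^\oplus$ is densely defined and symmetric, use the selfadjointness of each $\alpha_m\DD_m$ to get that $\Ran(\DD^\oplus\pm i)$ contains (hence is dense in) the finitely supported vectors, and conclude by the basic criterion. Your treatment of the density steps is in fact a little cleaner than the paper's explicit diagonal approximation argument, but the underlying ideas are identical.
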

      \begin{proof}
        Let $v^\oplus\vc \bigoplus_{m\in\N} v_m \in \H^\oplus$. For any $m\in\N$, the operator $\DD_m$ is densely defined so there is a sequence $(v_{mj})_{j\in\N}$ of elements in $\dom(\DD_m)$ converging to $v_m$ as $j\to\infty$. Thus for any fixed $(m,j)\in\N^2$, there is $M_{mj}\in\N$ such that $\norm{v_m - v_{m,M_{mj}+k}}_{\H_m}^2 < 2^{-j}$ for any $k\in\N$. Define for any $j\in\N$ the vector $w_j^{\oplus} \vc \bigoplus_{m=0}^j v_{m,M_{mj}} \in \dom(\DD^\oplus)$. For any $j\in\N$, $w_j^\oplus \in \dom(\DD^\oplus)$ and
        \begin{align*}
          \norm{v^\oplus - w_j^\oplus}^2_{\H^\oplus} = \sum_{m=0}^j \norm{v_m - v_{m,M_{mj}+k}}^2_{\H_m} + \sum_{m>j}\norm{v_m}^2_{\H_m} < j2^{-j} +  \sum_{m>j}\norm{v_m}^2_{\H_m} \underset{j\to +\infty}{\longrightarrow} 0\,.
        \end{align*}
        Thus for any $\eps>0$, there exists $N\in\N$ such that $\norm{v^\oplus-w_N^\oplus}_{\H^\oplus} < \eps$, which shows that $\DD^{\oplus}$ is densely defined.\\
        Using the same reasoning and the fact that for any $m\in\N$, $\Ran(\alpha_m\DD_m \pm i) = \H_m$ (since $\alpha_m\DD_m$ is selfadjoint), it can be shown that for any $v^\oplus\in \H^\oplus$ and $\eps>0$, there is $N\in\N$ and $w_N^\oplus\in\dom(\DD^\oplus)$ defined as above and such that $\norm{v^\oplus - (\DD^\oplus \pm i)w_N^\oplus}_{\H^\oplus}<\eps$, thus $\Ran(\DD^\oplus \pm i)$ is dense in $\H^\oplus$.\\
        The operator $\DD^\oplus$ is also symmetric since for any $v^\oplus\vc \sum_{m=0}^{N}v_m$ and $ v'^\oplus \vc \sum_{m=0}^{N'}v'_m$ in $\dom(\DD^\oplus)$,
        \begin{align*}
          \scalp{\DD^\oplus v^\oplus}{v'^\oplus}_{\H^{\oplus}} & = \!\!\!\sum_{m=0}^{\min(N,N')}\!\!\! \scalp{\alpha_m\DD_m v_m}{v'_m}_{\H_m}  = \!\!\!\sum_{m=0}^{\min(N,N')} \!\!\!\scalp{v_m}{\alpha_m\DD_m v'_m}_{\H_m}  = \scalp{v^\oplus}{\DD^\oplus v'^\oplus}_{\H^{\oplus}}\,,
        \end{align*}
        which shows that $\DD^\oplus$ is essentially selfadjoint (see \cite[Chapter VIII.2, Corollary p.257]{simon1980methods}).
      \end{proof}
      
      The following result establishes sufficient conditions on an infinite family of spectral triples together with a family of weights $(\alpha_m)_{m\in\N}\in(\R\noNull)^\N$ so that the corresponding weighted direct sum is a spectral triple. 
      
      \begin{proposition}
        \label{propSumST}
        Let $(\A_m, \H_m, \DD_m)_{m\in\N}$ be a family of (not necessarily unital) spectral triples, with corresponding representations $(\pi_m)_{m\in\N}$, and denote $\norm{.}_m$ the norm on $\H_m$.\\
        Let $(\alpha_m)_{m\in\N}$ be a sequence of non-zero real numbers such that
        \begin{align}
          \label{condInv}
          \norm{(1+\alpha_m^{2}\DD_m^2)^{-1/2}}_m \,\underset{m\to +\infty}{\longrightarrow}\,0\,.
        \end{align}
        Define the following objects:
        \begin{itemize}
          \item $\H^\oplus \vc \bigoplus_{m\in\N} \H_m$,
          \item $\DD^\oplus \vc \bigoplus_{m\in\N} \alpha_m \,\DD_m$ and $\overline{\DD^\oplus}$ as above, both acting on $\H^\oplus$,
          \item $\displaystyle\begin{aligned}[t]
                    \!\A^\oplus \vc \big\{
					(a_m)_{m\in\N}\in\prod_{m\in\N}\A_m: \,&  \sup_{m\in\N} \norm{\pi_m(a_m)}_m < +\infty\,, \text{ and }\\
					& \sup_{m\in\N}\, \norm{\,\commut{\alpha_m\DD_m}{\pi_m(a_m)}\,}_m<+\infty
				    \big\},
			      \end{aligned}$
          \item $\pi^\oplus(a^\oplus) \vc \bigoplus_{m\in\N} \pi_m(a_m)$, for $a^\oplus\in \A^\oplus$.
        \end{itemize}
        Then $(\A^{\oplus}, \H^{\oplus}, \overline{\DD^{\oplus}})$ is a (not necessarily unital) spectral triple.
      \end{proposition}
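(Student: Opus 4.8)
The plan is to verify, one after another, the defining properties of a (not necessarily unital) spectral triple for the triple $(\A^\oplus,\H^\oplus,\overline{\DD^\oplus})$. The algebraic part is routine and I would dispatch it first: closure of $\A^\oplus$ under sums and scalars is immediate, closure under products follows from submultiplicativity of the operator norm (for the first defining estimate) together with the Leibniz identity $\commut{\alpha_m\DD_m}{\pi_m(a_mb_m)}=\commut{\alpha_m\DD_m}{\pi_m(a_m)}\pi_m(b_m)+\pi_m(a_m)\commut{\alpha_m\DD_m}{\pi_m(b_m)}$ (for the second), and stability under the involution uses that $\DD_m=\DD_m^*$ gives $\commut{\alpha_m\DD_m}{\pi_m(a_m^*)}=-\commut{\alpha_m\DD_m}{\pi_m(a_m)}^*$ on $\dom(\DD_m)$, so both suprema in the definition of $\A^\oplus$ are unchanged by $a_m\mapsto a_m^*$. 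Then $\pi^\oplus$ is a $*$-homomorphism with $\norm{\pi^\oplus(a^\oplus)}=\sup_m\norm{\pi_m(a_m)}_m<+\infty$, and it is faithful because $\pi^\oplus(a^\oplus)=0$ forces $\pi_m(a_m)=0$, hence $a_m=0$, for every $m$, each $\pi_m$ being faithful.

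Next I would record what is needed about $\overline{\DD^\oplus}$. It is selfadjoint by \lemref{lemDoplusEssSelfAdjoint}; moreover, since the full orthogonal direct sum $\bigoplus_m\alpha_m\DD_m$ taken with its maximal domain is already selfadjoint and extends $\DD^\oplus$, it must coincide with $\overline{\DD^\oplus}$. Hence $\dom(\DD^\oplus)$ is a core, and for $\lambda\notin\Spec(\overline{\DD^\oplus})$ one has $\lambda\notin\Spec(\alpha_m\DD_m)$ for every $m$ together with $(\overline{\DD^\oplus}-\lambda)^{-1}=\bigoplus_m(\alpha_m\DD_m-\lambda)^{-1}$. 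For the bounded-commutator property, fix $a^\oplus\in\A^\oplus$ and work first on the core: for $v^\oplus=\bigoplus_{m=0}^N v_m\in\dom(\DD^\oplus)$ each $\pi_m(a_m)v_m$ lies in $\dom(\DD_m)$ by the axiom for the $m$-th triple, so $\pi^\oplus(a^\oplus)v^\oplus\in\dom(\DD^\oplus)$, $\commut{\DD^\oplus}{\pi^\oplus(a^\oplus)}v^\oplus=\bigoplus_{m=0}^N\commut{\alpha_m\DD_m}{\pi_m(a_m)}v_m$, and $\norm{\commut{\DD^\oplus}{\pi^\oplus(a^\oplus)}v^\oplus}\le\big(\sup_m\norm{\commut{\alpha_m\DD_m}{\pi_m(a_m)}}_m\big)\norm{v^\oplus}$, a finite bound by the definition of $\A^\oplus$. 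Thus $\commut{\DD^\oplus}{\pi^\oplus(a^\oplus)}$ extends to a bounded operator on $\H^\oplus$, and a standard approximation in the graph norm by elements of the core — using the boundedness of $\pi^\oplus(a^\oplus)$ and of the commutator, and the closedness of $\overline{\DD^\oplus}$ — promotes this to the statement that $\pi^\oplus(a^\oplus)$ maps $\dom(\overline{\DD^\oplus})$ into itself with $\commut{\overline{\DD^\oplus}}{\pi^\oplus(a^\oplus)}$ bounded.

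The main point is the compactness condition, and this is where \eqref{condInv} is used. The hypothesis says exactly that $c_m\vc\norm{(1+\alpha_m^2\DD_m^2)^{-1/2}}_m=\sup\{(1+\mu^2)^{-1/2}:\mu\in\Spec(\alpha_m\DD_m)\}$ tends to $0$, equivalently that $\delta_m\vc\inf\{|\mu|:\mu\in\Spec(\alpha_m\DD_m)\}=\sqrt{c_m^{-2}-1}$ tends to $+\infty$; in particular $\Spec(\alpha_m\DD_m)$ escapes every bounded set as $m\to\infty$. Now fix $\lambda\notin\Spec(\overline{\DD^\oplus})$ and $a^\oplus\in\A^\oplus$. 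Each summand $\pi_m(a_m)(\alpha_m\DD_m-\lambda)^{-1}=\alpha_m^{-1}\,\pi_m(a_m)(\DD_m-\alpha_m^{-1}\lambda)^{-1}$ is compact by the spectral triple axiom for the $m$-th triple (with $\alpha_m^{-1}\lambda\notin\Spec(\DD_m)$), and for $m$ large enough $\norm{\pi_m(a_m)(\alpha_m\DD_m-\lambda)^{-1}}_m\le\big(\sup_k\norm{\pi_k(a_k)}_k\big)\,(\delta_m-|\lambda|)^{-1}$, which tends to $0$. Since a direct sum $\bigoplus_m S_m$ of compact operators whose norms tend to $0$ is the operator-norm limit of its finite truncations $S_0\oplus\dots\oplus S_N\oplus 0\oplus\dots$, each of which is compact, the operator $\pi^\oplus(a^\oplus)(\overline{\DD^\oplus}-\lambda)^{-1}=\bigoplus_m\pi_m(a_m)(\alpha_m\DD_m-\lambda)^{-1}$ is compact; this is the nonunital spectral triple condition, and when all the $\A_m$ are unital the choice $a^\oplus=(1_m)_m\in\A^\oplus$ yields compactness of $(\overline{\DD^\oplus}-\lambda)^{-1}$ itself. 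I expect the only genuine obstacle to be this last translation of \eqref{condInv} into the decay of $\norm{(\alpha_m\DD_m-\lambda)^{-1}}_m$ together with the elementary fact about direct sums of compact operators; everything else is bookkeeping.
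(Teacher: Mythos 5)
Your proposal is correct and follows essentially the same route as the paper: check that $\A^\oplus$ is a $*$-algebra, get selfadjointness of $\overline{\DD^\oplus}$ from \lemref{lemDoplusEssSelfAdjoint}, deduce compactness from the fact that the summands are compact with norms tending to $0$ thanks to \eqref{condInv}, and bound the commutator on the core $\dom(\DD^\oplus)$ before extending to $\dom(\overline{\DD^\oplus})$. The only cosmetic differences are that you phrase the compactness condition via the resolvent $(\overline{\DD^\oplus}-\lambda)^{-1}$ rather than $(1+(\DD^\oplus)^2)^{-1/2}$, and you spell out the graph-norm approximation that the paper delegates to \cite[Proposition A.1]{P2014}.
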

      \begin{proof}
	
	For two elements $a^\oplus=(a_m)_{m\in\N}$ and $b^\oplus=(b_m)_{m\in\N}$ in $\A^\oplus$, we have:
        \begin{align*}
          & \sup_{m\in\N} \norm{\pi_m(a_mb_m)}_m \leq \sup_{m\in\N} \norm{\pi_m(a_m)}_m\, \sup_{m\in\N} \norm{\pi_m(b_m)}_m < +\infty \,, \quad \text { and }\\
          & \sup_{m\in\N} \norm{\commut{\alpha_m\DD_m}{\pi_m(a_mb_m)}}_m \leq \sup_{m\in\N} \norm{\pi_m(a_m)}_m\, \sup_{m\in\N}\norm{\commut{\alpha_m\DD_m}{\pi_m(b_m)}}_m  \\
          &\hspace{5.1cm}+\sup_{m\in\N}\norm{\commut{\alpha_m\DD_m}{\pi_m(a_m)}}_m \,\sup_{m\in\N}\norm{\pi_m(b_m)}_m < +\infty\,,
        \end{align*}
        hence $\A^\oplus$ is an algebra with involution $*:a^\oplus=(a_m)_{m\in\N} \mapsto (a^\oplus)^*\vc (a_m^*)_{m\in\N}$.\\
	For $a^\oplus \in\A^\oplus$, we have
        \begin{align*}
          \pi^\oplus(a^\oplus)\,\big(1+(\DD^\oplus)^2\big)^{-1/2}= \bigoplus_{m\in\N} \pi_m(a_m)\,(1+\alpha_m^2\,\DD_m^{2})^{-1/2}.
        \end{align*}
        For any $m\in\N$, the summand $\pi_m(a_m)\,(1+\alpha_m^2\,\DD_m^{2})^{-1/2}$ is compact. 
        From \eqref{condInv} and the fact that $\pi^\oplus$ is a bounded representation, $\norm{\pi_m(a_m)\,(1+\alpha_m^2\,\DD_m^{2})^{-1/2}}_m$ tends to 0 as $m\to+\infty$. 
        As a consequence, $\pi^\oplus(a^\oplus)\,\big(1+(\DD^\oplus)^2\big)^{-1/2}$ is compact.\\
        From \lemref{lemDoplusEssSelfAdjoint}, $\DD^\oplus$ is essentially selfadjoint with selfadjoint extension $\overline{\DD^\oplus}$.\\
        For $a^\oplus\in\A^\oplus$ and $v_N^\oplus \vc \bigoplus_{m=0}^Nv_{N,m} \in \dom(\DD^\oplus)$, for some $N\in\N$, we have 
        \begin{align*}
	  \pi^\oplus(a^\oplus)v_N^\oplus = \bigoplus_{m=0}^N \pi_m(a_m)v_{N,m}
	\end{align*}
	and each summand on the right-hand side belongs to $\dom(\DD_m)$ since $(\A_m,\H_m,\DD_m)$ is a spectral triple for any $m\in\N$. Thus $\pi^\oplus(a^\oplus)$ maps $\dom(\DD^\oplus)$ into itself for any $a^\oplus\in\A^\oplus$. \\
        Moreover, for any $a^\oplus\in\A^\oplus$ and $v_N^\oplus \vc \bigoplus_{m=0}^Nv_{N,m} \in \dom(\DD^\oplus)$ of norm 1, we have
        \begin{align*}
          \norm{\commut{\DD^\oplus}{\pi^\oplus(a^\oplus)}v_N^\oplus} = \sup_{m=0,\dots,N} \norm{\commut{\alpha_m\DD_m}{\pi_m(a_m)}v_{N,m}} \leq  \sup_{m\in\N}\, \norm{\,\commut{\alpha_m\DD_m}{\pi_m(a_m)}\,}_m<+\infty\,,
        \end{align*}
        so $\commut{\DD^\oplus}{\pi^\oplus(a^\oplus)}$ is bounded on $\dom(\DD^\oplus)$. Moreover, since 
        \begin{align*}
          \overline{\overline{\DD^\oplus} \evalat{\dom(\DD^\oplus)}} = \overline{\DD^\oplus\evalat{\dom(\DD^\oplus)}} = \overline{\DD^\oplus}\,,
        \end{align*}
	then $\dom(\DD^\oplus)$ is a core for $\overline{\DD^\oplus}$. Using \cite[Proposition A.1]{P2014}, we conclude that for any $a^\oplus\in\A^\oplus$,
        \begin{align*}
          \pi^\oplus(a^\oplus)\big(\dom(\overline{\DD^\oplus}) \big)\subset \dom(\overline{\DD^\oplus})
        \end{align*}
	and $\commut{\overline{\DD^\oplus}}{\pi^\oplus(a^\oplus)}$ extends to a bounded operator on $\H^\oplus$.
      \end{proof}

      \begin{definition}
	The spectral triple $(\A^{\oplus}, \H^{\oplus}, \overline{\DD^{\oplus}})$ as above is called the integration of the five-tuple $(\A_m, \H_m, \DD_m, \pi_m, \alpha_m)_{m\in\N}$, where  $(\A_m, \H_m, \DD_m)$ is a spectral triple for any $m\in\N$, with corresponding representations $\pi_m$ and weights $\alpha_m$ in $\R\noNull$.
      \end{definition}
      As a consequence of \eqref{condInv}, the sequence $(\DD_m)_{m\in\N}$ is such that $\sum_{m\in\N} \dim \,( \Ker \,\DD_m)<\infty$. In particular, if we take the same $\DD_m=\DD_0$ at each level $m\in\N$, the latter must be invertible.
      
      The two conditions in the definition of $\A^\oplus$ correspond to the boundedness of both the representation $\pi^{\oplus}$ and the commutator $\commut{\DD^\oplus}{\pi^\oplus(\A^\oplus)}$ for the norm $\norm{.}^{\oplus} \vc \sup_{m\in\N} \norm{.}_m$ on $\pi^\oplus(\A^\oplus)$.\\
      The parameter $(\alpha_m)_{m\in\N}$ has been introduced in order to control the behaviour of the sequence $(\DD_m)_{m\in\N}$ as $m$ tends to infinity. This can be avoided by putting some constraints directly on the operators $\DD_m$, but this restricts the set of summable families of spectral triples. For instance, when $\DD^\oplus \vc \bigoplus_{m\in\N} \DD_0$, with $\DD_0$ invertible, then the resolvent of $\DD^\oplus$ is not compact.
        
      We make use of the following notations for the rest of the document. For a multiindex $\alpha\in\N^n$ and $z\in\C^n$, denote $z^\alpha \vc z_1^{\alpha_1}z_2^{\alpha_2}\dots z_n^{\alpha_n}$ and $\abs{\alpha} \vc \alpha_1+\dots+\abs{\alpha_n}$. For any set $X\subset\C^n$, let $\Pol(X)$ be the set of polynomial functions in $z$ and $\bar{z}$ over $X$. We denote $\B^n \vc \{ z\in\C^n\,, \, \abs{z}< 1\}$ the unit open ball of $\C^n$, $\overline{\B^n}$ its closure, and simply $\B \vc \B^1$ the unit open disk in $\C$. The corresponding boundaries are denoted respectively $\partial\B^n$ and $\partial\B$.
      
    \subsection{An example of integration over the unit ball}
      \label{secExampleIntegration}

      We choose for $\B^n$ the following defining function (i.e. a smooth function $r$ over $\overline{\B^n}$ such that $r\evalat{\B^n}<0$, $r\evalat{\partial\B^n} = 0$ and $dr\evalat{\partial\B^n} \neq 0$) and a weight on $\B^n$:
      \begin{align}
        \label{eqDefFunctionAndWeight}
        r(z) \vc \abs{z}^2 - 1 \,, \quad \text{ and } \quad w_{m}(z) \vc (-r(z))^{m}\,, \quad z\in \overline{\B^n}\,, \quad m\in(-1,+\infty)\,.
      \end{align}
      The \emph{weighted Bergman space} over $\B^n$ with weight $w_{m}$ is
      \begin{align*}
        A^2_{m}(\B^n) \vc \{ \phi\in L^2(\B^n, w_{m} d\mu) \,, \text{ $\phi$ holomorphic in $\B^n$} \}\,,
      \end{align*}
      where $d\mu$ is the usual normalized Lebesgue measure over $\B^n$. Denote $\boldsymbol{\Pi}_{m}$ the orthogonal projection from $L^2(\B^n)$ onto $A^2_{m}(\B^n)$. The \emph{Toeplitz operator} $\fatT^{(m)}_f : A^2_{m}(\B^n) \to A^2_{m}(\B^n)$ associated to the function $f\in \Coo(\overline{\B^n})$ is defined as
      \begin{align*}
        \fatT^{(m)}_f : \phi \mapsto \boldsymbol{\Pi}_{m} (f\phi) \,.
      \end{align*}
      In particular, Toeplitz operators enjoy the following properties: 
      \begin{align}
	\label{eqPropToeplitz}
        f\mapsto \fatT^{(m)}_f \text{ is linear}\,, \quad \norm{\fatT_f^{(m)}} \leq \norm{f}_\infty \,, \quad \text{ and } \quad (\fatT^{(m)}_f)^* = \mathbf{T}^{(m)}_{\bar f}\,.
      \end{align}
      Since in general the product of two Toeplitz operators is not a Toeplitz operator anymore, we will consider the $*$-algebra \emph{generated} by the Toeplitz operators, the involution being the Hilbert space adjoint operation \eqref{eqPropToeplitz}.
      
      The following result is a corollary of \cite[Proposition 5.4]{EFI2012}:
      \begin{proposition}
        \label{propTripleBergman}
        For any real number $m>-1$, let $\A_{m}$ be the algebra generated by the Toeplitz operators $\mathbf{T}^{(m)}_f$, $f\in\Coo(\overline{\B^n})$, with the identity representation on $\H_{m} \vc A^2_{m}(\B^n)$, and also $\DD_{m} \vc (\mathbf{T}_{-r}^{(m)})^{-1}$.\\
        Then $(\A_{m}\,,\H_{m}\,,\DD_{m})$ is a spectral triple of spectral dimension $n=\dim_{\C}\,\B^n$.
      \end{proposition}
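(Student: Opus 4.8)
The plan is to deduce the structural statement from \cite[Proposition 5.4]{EFI2012}, specialized to the bounded strictly pseudoconvex domain $\B^n$ with the defining function $r(z)=\abs z^2-1$ and weight $w_m$ of \eqref{eqDefFunctionAndWeight}, and then to pin down the spectral dimension by an explicit diagonalization, which the ball makes available. The three things to check are: the representation is faithful and by bounded operators (immediate, since it is the identity representation of a $*$-algebra of bounded operators on $\H_m$); $\DD_m=(\fatT^{(m)}_{-r})^{-1}$ is selfadjoint with compact resolvent; and $\commut{\DD_m}{\fatT^{(m)}_f}$ is bounded for every $f\in\Coo(\overline{\B^n})$, hence — by the Leibniz rule and $\commut{\DD_m}{T^*}=-\commut{\DD_m}{T}^*$, as in the proof of \propref{propSumST} — for every element of the generated $*$-algebra $\A_m$.

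For $\DD_m$: the symbol $-r=1-\abs z^2$ is continuous, nonnegative on $\overline{\B^n}$ and vanishes exactly on $\partial\B^n$, so by \eqref{eqPropToeplitz} the operator $\fatT^{(m)}_{-r}$ is bounded, selfadjoint and positive; it is injective, since $\scalp{\fatT^{(m)}_{-r}\phi}{\phi}_m=\int_{\B^n}(1-\abs z^2)\abs\phi^2\,w_m\,d\mu=0$ forces $\phi=0$ for $\phi$ holomorphic; and it is compact, because a Bergman--Toeplitz operator with symbol vanishing on the boundary is compact (also visible from the eigenvalue computation below). Therefore $\DD_m\vc(\fatT^{(m)}_{-r})^{-1}$ is a densely defined positive selfadjoint operator with domain $\Ran(\fatT^{(m)}_{-r})$, whose inverse $\fatT^{(m)}_{-r}$ is compact, so $\DD_m$ has compact resolvent.

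For the commutators, on a suitable dense domain one has $\commut{\DD_m}{\fatT^{(m)}_f}=-(\fatT^{(m)}_{-r})^{-1}\commut{\fatT^{(m)}_{-r}}{\fatT^{(m)}_f}(\fatT^{(m)}_{-r})^{-1}$, so the point is that $\commut{\fatT^{(m)}_{-r}}{\fatT^{(m)}_f}$ gains two orders relative to $\fatT^{(m)}_{-r}$; this is exactly the estimate provided by the generalized Toeplitz (Boutet de Monvel--Guillemin) calculus underlying \cite[Proposition 5.4]{EFI2012}, and I would simply invoke it. This microlocal input is the part of the argument that genuinely requires work, and is where I expect the real obstacle to lie if one insisted on a self-contained proof; the rest is bookkeeping.

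It remains to compute the spectral dimension. Since $w_m$ is radial, the monomials $\{z^\alpha:\alpha\in\N^n\}$ form a pairwise orthogonal basis of $A^2_m(\B^n)$, and $\fatT^{(m)}_{-r}$ is diagonal on them: $\fatT^{(m)}_{-r}z^\alpha=\lambda_\alpha z^\alpha$ with $\lambda_\alpha=\frac{\int_{\B^n}\abs{z^\alpha}^2(1-\abs z^2)^{m+1}\,d\mu}{\int_{\B^n}\abs{z^\alpha}^2(1-\abs z^2)^m\,d\mu}$. Evaluating these Beta-type integrals gives $\lambda_\alpha=\frac{m+1}{\abs\alpha+n+m+1}$, hence $\DD_m$ has eigenvalue $\frac{k+n+m+1}{m+1}$ on the span of $\{z^\alpha:\abs\alpha=k\}$, a space of dimension $\binom{k+n-1}{n-1}$. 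Consequently $\Tr\abs{\DD_m}^{-s}=\sum_{k\geq0}\binom{k+n-1}{n-1}\big(\tfrac{m+1}{k+n+m+1}\big)^s$, whose general term is of order $k^{n-1-s}$; the series converges if and only if $s>n$, so the spectral dimension equals $n=\dim_{\C}\B^n$.
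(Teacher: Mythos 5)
Your proposal is correct and follows essentially the same route as the paper, which states this result as a direct corollary of \cite[Proposition 5.4]{EFI2012} and likewise leaves the microlocal commutator estimate to the generalized Toeplitz calculus of Boutet de Monvel--Guillemin. Your explicit diagonalization is right --- the eigenvalue $\lambda_\alpha=\tfrac{m+1}{\abs{\alpha}+n+m+1}$ agrees with the identity $\mathbf{T}_{-r}^{-1}=\tfrac{1}{m+1}(\fat{\mathcal{R}}+m+n+1)$ that the paper derives later in \eqref{eqRelationsShift} --- and it both confirms the compactness of the resolvent and yields the spectral dimension $n$ by the $k^{n-1-s}$ tail estimate.
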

      
      In order to get a family of spectral triples, we make $m$ vary in the set of integers, and for the rest of this section $m \in\N$. Let us present a preliminary result which establishes the dependence on $m$ of the commutator between a Toeplitz operator with polynomial symbol and the previous operator $\DD_m \vc (\fatT_{-r}^{(m)})^{-1}$. Denote the operators $\fat{\mathcal{R}} \vc \sum_{j=1}^n \fat{\mathcal{R}}_j$ and $\overline{\fat{\mathcal{R}}} \vc \sum_{j=1}^n \overline{\fat{\mathcal{R}}}_j$ with $\fat{\mathcal{R}}_j \vc z_j\partial_{z_j}$ and $\overline{\fat{\mathcal{R}}}_j \vc \bar{z}_j\,\partial_{\bar{z}_j}$, acting on $\Coo(\overline{\B^n})$. 
      
      \begin{proposition}
        \label{propCommutatorInAlg}
        For any polynomial function $p(z) = \!\!\!\!\!\!\underset{\abs{\alpha}\leq d,\abs{\beta}\leq d'}{\sum} p_{\alpha\beta}\, z^\alpha\,\bar{z}^\beta \in\Pol(\mathbb{B}^n)$, we have
        \begin{align*}
          \commut{(\mathbf{T}^{(m)}_{-r})^{-1}}{\mathbf{T}^{(m)}_p} = \tfrac{1}{m+1} \, \mathbf{T}^{(m)}_{(\fat{\mathcal{R}}-\overline{\fat{\mathcal{R}}})\,p}\,, \quad \text{  on $A^2_m(\mathbb{B}^n)$}\,.
        \end{align*}
      \end{proposition}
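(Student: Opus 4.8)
\emph{Proof plan.} The plan is to diagonalise both operators in the orthogonal monomial basis $\{z^\alpha:\alpha\in\N^n\}$ of $A^2_m(\B^n)$, which turns the identity into an elementary cancellation of $\Gamma$-function factors. Since $f\mapsto\mathbf{T}^{(m)}_f$ is linear and $\fat{\mathcal{R}}-\overline{\fat{\mathcal{R}}}$ acts linearly on $\Pol(\B^n)$, I would first reduce to a single monomial symbol $p=z^\sigma\bar{z}^\tau$; and since the right-hand side $\tfrac1{m+1}\mathbf{T}^{(m)}_{(\fat{\mathcal{R}}-\overline{\fat{\mathcal{R}}})p}$ is bounded (by \eqref{eqPropToeplitz}, of norm at most $\tfrac1{m+1}\norm{(\fat{\mathcal{R}}-\overline{\fat{\mathcal{R}}})p}_\infty$), it is enough to verify the identity on each basis vector $z^\alpha$ and then extend by continuity from the dense span of the monomials, which is a core for the unbounded operator $(\mathbf{T}^{(m)}_{-r})^{-1}$.

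Next I would record the monomial norms. Rotation invariance of the weight $(-r)^m=(1-\abs{z}^2)^m$ makes monomials of distinct multidegrees orthogonal, and a polar-coordinate computation --- a beta integral in the radial variable times the classical integral of $\abs{\zeta^\alpha}^2$ over the sphere $\partial\B^n$ --- gives $\norm{z^\alpha}_m^2 = c_{n,m}\,\alpha!\,/\,\Gamma(n+\abs{\alpha}+m+1)$ with $c_{n,m}$ independent of $\alpha$. Orthogonality then yields the projection formula $\boldsymbol{\Pi}_m(z^\mu\bar{z}^\tau)=\big(\norm{z^\mu}_m^2/\norm{z^{\mu-\tau}}_m^2\big)\,z^{\mu-\tau}$ when $\mu\geq\tau$ componentwise, and $\boldsymbol{\Pi}_m(z^\mu\bar{z}^\tau)=0$ otherwise. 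Feeding in $-r=1-\sum_{j=1}^n z_j\bar{z}_j$ and summing the resulting ratios $(\alpha_j+1)/(n+\abs{\alpha}+m+1)$ over $j=1,\dots,n$, I expect
\begin{align*}
\mathbf{T}^{(m)}_{-r}\,z^\alpha \;=\; \frac{m+1}{\,n+\abs{\alpha}+m+1\,}\;z^\alpha\,,\qquad\text{hence}\qquad (\mathbf{T}^{(m)}_{-r})^{-1}z^\alpha \;=\; \frac{\,n+\abs{\alpha}+m+1\,}{m+1}\;z^\alpha\,,
\end{align*}
which in passing re-establishes the invertibility of $\mathbf{T}^{(m)}_{-r}$ used in \propref{propTripleBergman}.

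Then, for $p=z^\sigma\bar{z}^\tau$, the projection formula shows $\mathbf{T}^{(m)}_p z^\alpha=\boldsymbol{\Pi}_m(z^{\alpha+\sigma}\bar{z}^\tau)$ is a scalar multiple of the single monomial $z^{\alpha+\sigma-\tau}$ (and is $0$ unless $\alpha+\sigma\geq\tau$ componentwise). As $(\mathbf{T}^{(m)}_{-r})^{-1}$ is diagonal, applying the commutator to $z^\alpha$ multiplies $\mathbf{T}^{(m)}_p z^\alpha$ by the difference of the eigenvalues of $(\mathbf{T}^{(m)}_{-r})^{-1}$ on $z^{\alpha+\sigma-\tau}$ and on $z^\alpha$, i.e. by $\tfrac1{m+1}\big((n+\abs{\alpha}+\abs{\sigma}-\abs{\tau}+m+1)-(n+\abs{\alpha}+m+1)\big)=\tfrac{\abs{\sigma}-\abs{\tau}}{m+1}$, which is independent of $\alpha$. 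Since also $(\fat{\mathcal{R}}-\overline{\fat{\mathcal{R}}})(z^\sigma\bar{z}^\tau)=(\abs{\sigma}-\abs{\tau})\,z^\sigma\bar{z}^\tau$, one has $\tfrac1{m+1}\mathbf{T}^{(m)}_{(\fat{\mathcal{R}}-\overline{\fat{\mathcal{R}}})p}z^\alpha=\tfrac{\abs{\sigma}-\abs{\tau}}{m+1}\mathbf{T}^{(m)}_p z^\alpha$; the two sides agree on every $z^\alpha$, and the common factor $\abs{\sigma}-\abs{\tau}$ makes the degenerate case $\alpha+\sigma\not\geq\tau$ harmless, both sides being $0$.

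I expect the computation to be essentially routine; the only delicate points are the bookkeeping with ratios of $\Gamma$-values in $\norm{z^\alpha}_m^2$ and the degenerate multi-index case. The reason the identity holds is conceptual: conjugating a Toeplitz operator by $(\mathbf{T}^{(m)}_{-r})^{-1}$ detects only the net ``holomorphic-minus-antiholomorphic degree'' of the symbol --- precisely the eigenvalue of $\fat{\mathcal{R}}-\overline{\fat{\mathcal{R}}}$ --- because the eigenvalues of $(\mathbf{T}^{(m)}_{-r})^{-1}$ are affine in $\abs{\alpha}$ with slope $\tfrac1{m+1}$, so the dependence on $z^\alpha$ cancels and leaves the overall factor $\tfrac1{m+1}$.
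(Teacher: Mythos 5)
Your proposal is correct, and it reaches the identity by a route that is organisationally different from, though computationally equivalent to, the paper's. Both arguments ultimately rest on the same two facts: that $(\mathbf{T}^{(m)}_{-r})^{-1}$ acts diagonally on the monomial basis with eigenvalue $\tfrac{\abs{\alpha}+m+n+1}{m+1}$ on $z^\alpha$ (i.e.\ $\mathbf{T}_{-r}^{-1}=\tfrac{1}{m+1}(\fat{\mathcal{R}}+m+n+1)$), and that $\mathbf{T}^{(m)}_{z^\sigma\bar z^\tau}$ shifts the total degree by $\abs{\sigma}-\abs{\tau}$. You obtain both facts by computing monomial norms and the explicit projection formula for $\boldsymbol{\Pi}_m(z^\mu\bar z^\tau)$, and then read off the commutator as an eigenvalue difference, which is affine in $\abs{\alpha}$ and hence independent of $\alpha$. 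The paper instead packages the same information in shift-operator relations ($\mathbf{T}_{z_j}=\fat{S}_j(\tfrac{\fat{\mathcal{R}}_j+1}{\fat{\mathcal{R}}+m+n+1})^{1/2}$, etc.), derives $\commut{\mathbf{T}_{-r}^{-1}}{\mathbf{T}_{z_j}}=\tfrac{1}{m+1}\mathbf{T}_{z_j}$, and iterates the Leibniz rule $\commut{A}{BC}=B\commut{A}{C}+\commut{A}{B}C$ through the factorisation $\mathbf{T}_{z^\alpha\bar z^\beta}=\big(\prod_j(\mathbf{T}^*_{z_j})^{\beta_j}\big)\big(\prod_j\mathbf{T}_{z_j}^{\alpha_j}\big)$. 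Your version is more elementary and self-contained (no shift operators, no factorisation identity to verify) at the cost of the norm computation $\norm{z^\alpha}_m^2=c_{n,m}\,\alpha!/\Gamma(n+\abs{\alpha}+m+1)$; it also handles the degenerate case $\alpha+\sigma\not\geq\tau$ and the domain/core issue for the unbounded operator $(\mathbf{T}^{(m)}_{-r})^{-1}$ more explicitly than the paper does. The paper's version has the advantage of producing along the way the operator identities \eqref{eqRelationsShift}, which are reused in the final section.
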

      
      \begin{proof}
        We denote briefly $\mathbf{T}_{p} = \mathbf{T}^{(m)}_{p}$. An orthonormal basis of $A^2_m(\mathbb{B}^n)$ is given by (see \cite[(2.9)]{Z2005})
        \begin{align}
          \label{eqONBBergmanBn}
          u_{m,\alpha}(z)  \vc \big(\tfrac{(\abs{\alpha}+m+n)!}{(m+n)!\,\alpha!} \big)^{1/2}\, z^\alpha\,.
        \end{align}
        Using the shift operators $\fat{S}_j: u_{m,\alpha} \mapsto u_{m,\alpha+1_j}$, with $\alpha+1_j \vc (\alpha_1, \dots, \alpha_j +1, \dots, \alpha_n)$ and $j=1,\dots,n$, we have the relations
        \begin{align}
          \label{eqRelationsShift}
          \mathbf{T}_{z_j} &= \fat{S}_j\,(\tfrac{\fat{\mathcal{R}}_j+1}{\fat{\mathcal{R}}+m+n+1})^{1/2}\, , \quad \commut{\fat{\mathcal{R}}_j}{\fat{S}_k} = \delta(j=k)\,\fat{S}_j \,, \quad \fat{S}_j^* \fat{S}_j=1\,, \quad \text{ for } j=1,\dots,n, \quad \text{ and }\\
          \mathbf{T}_{-r}^{-1} & = (1-\sum_{j=1}^n\mathbf{T}_{\abs{z_j}^2})^{-1} = (1-\sum_{j=1}^n (\mathbf{T}_{z_j})^*\mathbf{T}_{z_j})^{-1} = (1-\sum_{j=1}^n\tfrac{\fat{\mathcal{R}}_j+1}{\fat{\mathcal{R}}+m+n+1})^{-1} \nonumber \\
          &= \tfrac{1}{m+1}(\fat{\mathcal{R}}+m+n+1)\,. \nonumber
        \end{align}
        Hence we get
        \begin{align*}
          \commut{\mathbf{T}_{-r}^{-1}}{\mathbf{T}_{z_j}} & = \tfrac{1}{m+1}\, \big(\,(\fat{\mathcal{R}}+m+n+1)\, \fat{S}_j\,(\tfrac{\fat{\mathcal{R}}_j+1}{\fat{\mathcal{R}}+m+n+1})^{1/2} - \fat{S}_j\,(\tfrac{\fat{\mathcal{R}}_j+1}{\fat{\mathcal{R}}+m+n+1})^{1/2} (\fat{\mathcal{R}}+m+n+1) \,\big) \\
          & = \tfrac{1}{m+1} \, \fat{S}_j\,(\tfrac{\fat{\mathcal{R}}_j+1}{\fat{\mathcal{R}}+m+n+1})^{1/2} \,\big( \fat{\mathcal{R}}+m+n+2 - (\fat{\mathcal{R}}+m+n+1) \big)\\
          &= \tfrac{1}{m+1}\,\mathbf{T}_{z_j}\,.
        \end{align*}
        From this last equality and the fact that $\commut{\mathbf{T}_{z_j}}{\mathbf{T}_{z_k}}=0$, for any $j,k=1,\dots,n$, we get by iteration of the formula $\commut{A}{BC} = B\commut{A}{C} + \commut{A}{B}C$
        \begin{align*}
          & \commut{\mathbf{T}_{-r}^{-1}}{\prod_{j=1}^n \mathbf{T}_{z_j}^{\alpha_j}} = \tfrac{\abs{\alpha}}{m+1}\,\prod_{j=1}^n \mathbf{T}_{z_j}^{\alpha_j} \quad \text{ and } \quad 
          \commut{\mathbf{T}_{-r}^{-1}}{\prod_{j=1}^n (\mathbf{T}^*_{z_j})^{\beta_j}} = -\tfrac{\abs{\beta}}{m+1}\,\prod_{j=1}^n (\mathbf{T}^*_{z_j})^{\beta_j}\,, \quad \alpha,\beta\in\N^n\,.
        \end{align*}
        Hence, the relation $\mathbf{T}_{z^\alpha\bar{z}^\beta} = \big(\prod_{j=1}^n (\mathbf{T}^*_{z_j})^{\beta_j} \big)\big(\prod_{j=1}^n \mathbf{T}_{z_j}^{\alpha_j}\big)$ yields to
        \begin{align*}
          \commut{\mathbf{T}_{-r}^{-1}}{\mathbf{T}_p} & = \!\!\!\!\sum_{\abs{\alpha}\leq d, \abs{\beta}\leq d'}  \!\!\!\!\!\!\!\!\,\commut{\mathbf{T}_{-r}^{-1}}{\mathbf{T}_{z^\alpha\bar{z}^\beta}} = \!\!\!\!\sum_{\abs{\alpha}\leq d, \abs{\beta}\leq d'}  \!\!\!\!\!\!\!\!\, \commut{\mathbf{T}_{-r}^{-1}}{\big(\prod_{j=1}^n (\mathbf{T}^*_{z_j})^{\beta_j} \big)\big(\prod_{j=1}^n \mathbf{T}_{z_j}^{\alpha_j}\big)}\\
          & = \!\!\!\!\sum_{\abs{\alpha}\leq d, \abs{\beta}\leq d'}  \!\!\!\! p_{\alpha\beta} \Bigg( \, \big(\prod_{j=1}^n (\mathbf{T}^*_{z_j})^{\beta_j} \big) \commut{\mathbf{T}_{-r}^{-1}}{\prod_{j=1}^n \mathbf{T}_{z_j}^{\alpha_j}} + \commut{\mathbf{T}_{-r}^{-1}}{\prod_{j=1}^n (\mathbf{T}^*_{z_j})^{\beta_j}}\prod_{j=1}^n \mathbf{T}_{z_j}^{\alpha_j} \, \Bigg)\\
          & = \tfrac{1}{m+1} \!\!\!\!\sum_{\abs{\alpha}\leq d, \abs{\beta}\leq d'}  \!\!\!\!\!\!\!\! p_{\alpha\beta}\, (\abs{\alpha}-\abs{\beta}) \big(\prod_{j=1}^n (\mathbf{T}^*_{z_j})^{\beta_j} \big)\big(\prod_{j=1}^n \mathbf{T}_{z_j}^{\alpha_j}\big) = \tfrac{1}{m+1}\,\!\!\!\!\sum_{\abs{\alpha}\leq d, \abs{\beta}\leq d'} \!\!\!\!\!\!\!\! p_{\alpha\beta}\, (\abs{\alpha}-\abs{\beta}) \,\mathbf{T}_{z^\alpha\bar{z}^\beta}\\
          & = \tfrac{1}{m+1}\, \mathbf{T}_{(\fat{\mathcal{R}}-\overline{\fat{\mathcal{R}}})\,p} \,.
          \tag*{\qed}
        \end{align*}
        \hideqed
      \end{proof}
        
      An example of previous integration of noncommutative spectral triples is given here for the unit ball $\B^n$:
      \begin{proposition}
        \label{propExampleSumTriple}
        For $m\in\N$, let 
        \vspace{-3mm}
        \begin{itemize}
          \item $\H_m \vc A^2_m(\mathbb{B}^n)$,
          \item $\DD_m \vc (\mathbf{T}_{-r}^{(m)})^{-1}$,
          \item $\A_m$ be the $^*$-algebra generated by Toeplitz operators $\mathbf{T}_p^{(m)}$ on $\H_m$, with $p\in\Pol(\mathbb{B}^n)$,
          \item $\pi_m$ be the identity representation on $\H_m$,
          \item $\norm{.}_m$ be the usual norm of operators,
          \item $\alpha_m \vc m+1$.
        \end{itemize}
        If we let $\H^\oplus$, $\overline{\DD^\oplus}$, $\pi^\oplus$ as in \propref{propSumST} and $\A'^\oplus$ be the algebra generated by elements of the form $(\mathbf{T}^{(m)}_{p})_{m\in\N}$, with $p\in\Pol(\mathbb{B}^n)$ (i.e. keeping the same polynomial at all levels $m\in\N$),
        then the previous quintuple is integrable and $(\A'^\oplus,\H^\oplus, \overline{\DD^\oplus})$ is a spectral triple of spectral dimension $n+1$.
      \end{proposition}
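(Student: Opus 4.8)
The plan is to obtain that $(\A'^\oplus,\H^\oplus,\overline{\DD^\oplus})$ is a spectral triple from \propref{propSumST}, and then to read off the spectral dimension from the explicitly known spectrum of $\overline{\DD^\oplus}$. First I would note that each $(\A_m,\H_m,\DD_m)$ appearing here is a spectral triple, being the restriction of the spectral triple of \propref{propTripleBergman} to the $*$-subalgebra generated by the Toeplitz operators with polynomial symbols (polynomials in $z,\bar z$ being smooth on $\overline{\mathbb{B}^n}$). So \propref{propSumST} applies once its hypothesis is verified for the weights $\alpha_m=m+1$. The key input is the identity already isolated in the proof of \propref{propCommutatorInAlg}: on $A^2_m(\mathbb{B}^n)$ one has $\alpha_m\DD_m=(m+1)(\fatT^{(m)}_{-r})^{-1}=\fatR+m+n+1$. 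Since $\fatR$ acts on the orthonormal basis $u_{m,\alpha}$ of \eqref{eqONBBergmanBn} as multiplication by $\abs{\alpha}\ge 0$, the operator $\alpha_m\DD_m$ is positive with smallest eigenvalue $m+n+1$, so $\norm{(1+\alpha_m^2\DD_m^2)^{-1/2}}_m=(1+(m+n+1)^2)^{-1/2}\to 0$, which is exactly condition \eqref{condInv}.

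Next I would check the inclusion $\A'^\oplus\subseteq\A^\oplus$. For a generator $(\fatT^{(m)}_p)_{m\in\N}$, the bound $\norm{\fatT^{(m)}_p}_m\le\norm{p}_\infty$ from \eqref{eqPropToeplitz} is uniform in $m$; and by \propref{propCommutatorInAlg} one has $\commut{\alpha_m\DD_m}{\fatT^{(m)}_p}=(m+1)\cdot\tfrac{1}{m+1}\fatT^{(m)}_{(\fatR-\overline{\fatR})p}=\fatT^{(m)}_{(\fatR-\overline{\fatR})p}$ — the point being that the weight $\alpha_m=m+1$ is chosen precisely to absorb the $\tfrac{1}{m+1}$ in the commutator formula — so $\norm{\commut{\alpha_m\DD_m}{\fatT^{(m)}_p}}_m\le\norm{(\fatR-\overline{\fatR})p}_\infty$, again uniform in $m$. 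Adjoints of generators are of the same form by \eqref{eqPropToeplitz}, and passing to products via $\commut{A}{BC}=B\commut{A}{C}+\commut{A}{B}C$ preserves both uniform bounds — this is exactly the computation performed at the start of the proof of \propref{propSumST}. Hence $\A'^\oplus\subseteq\A^\oplus$, and \propref{propSumST} yields that $(\A'^\oplus,\H^\oplus,\overline{\DD^\oplus})$ is a spectral triple.

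Finally, for the spectral dimension, $\overline{\DD^\oplus}$ is positive and diagonal in the total orthonormal basis $\{u_{m,\alpha}:m\in\N,\ \alpha\in\N^n\}$ with eigenvalue $\abs{\alpha}+m+n+1$ on $u_{m,\alpha}$, so for $s>0$
\[
\Tr\,\abs{\overline{\DD^\oplus}}^{-s}=\sum_{m\in\N}\sum_{k\ge 0}\binom{k+n-1}{n-1}\,(k+m+n+1)^{-s},
\]
there being $\binom{k+n-1}{n-1}$ multi-indices $\alpha\in\N^n$ with $\abs{\alpha}=k$. Setting $N=k+m$ and using the identity $\sum_{k=0}^N\binom{k+n-1}{n-1}=\binom{N+n}{n}$ turns this into $\sum_{N\ge 0}\binom{N+n}{n}(N+n+1)^{-s}$; since $\binom{N+n}{n}$ is a degree-$n$ polynomial in $N$ with leading coefficient $1/n!$, the series converges if and only if $s>n+1$. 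Hence $d=n+1$.

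The steps are essentially routine once $\alpha_m\DD_m=\fatR+m+n+1$ is in hand; the only point needing a little care is the rearrangement of the double sum in the last step, together with the (easy) observation that the closure $\overline{\DD^\oplus}$ of the algebraic direct sum has exactly the stated point spectrum with the stated multiplicities, which holds because each summand $\alpha_m\DD_m$ is already diagonalized by $\{u_{m,\alpha}\}_\alpha$.
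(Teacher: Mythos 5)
Your proposal is correct and follows the paper's proof: verify \eqref{condInv}, show $\A'^\oplus\subseteq\A^\oplus$ via the uniform bounds $\norm{\mathbf{T}^{(m)}_p}_m\le\norm{p}_\infty$ and \propref{propCommutatorInAlg} (so that \propref{propSumST} applies), and reduce the spectral dimension to the convergence of $\sum_{m,k}\binom{k+n-1}{n-1}(k+m+n+1)^{-s}$ using $\alpha_m\DD_m=\fatR+m+n+1$. The only divergence is the last step: the paper sums over $m$ first by comparison with $\int x^{-s}\,dx$ and then analyses the remaining sum over $k$, whereas you resum along the diagonals $N=k+m$ via $\sum_{k=0}^{N}\binom{k+n-1}{n-1}=\binom{N+n}{n}$ — both correctly yield convergence exactly for $s>n+1$, and your rearrangement is, if anything, the cleaner of the two.
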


      \begin{proof}
        First, we know from \propref{propTripleBergman} that for any $m\in\N$, $(\A_m, \H_m, \DD_m)$ defines a spectral triple of dimension $n$. Moreover,
        \begin{align*}
          \norm{(1+\alpha_m^2\,\DD_m^2)^{-1/2}}_m & = \norm{(1+\alpha_m^2 \, (\mathbf{T}_{-r}^{-2})^{(m)})^{-1/2}}_m \leq \abs{\alpha_m}^{-1}\norm{\mathbf{T}_{-r}^{(m)}}_m \\
          & \leq \abs{\alpha_m}^{-1}  \norm{r}_\infty \underset{m\to +\infty}{\longrightarrow}0\,.
        \end{align*}
        Let us show that $\A'^\oplus$ is a subalgebra of $\A^\oplus$ of \propref{propSumST}: if $(a_m)_{m\in\N}=(\mathbf{T}^{(m)}_p)_{m\in\N}$ of $\A'^\oplus$, with $p\in\Pol(\mathbb{B}^n)$, is a generator, the conditions are satisfied since
        \begin{align*}
          & \sup_{m\in\N}\norm{\pi_m(a_m)}_m \leq \norm{p}_\infty < +\infty \, \,\, \text{and from \propref{propCommutatorInAlg},}\\
          & \sup_{m\in\N}\norm{\commut{\alpha_m\DD_m}{\pi_m(a_m)}}_m = \sup_{m\in\N}\tfrac{m+1}{m+1}\norm{\mathbf{T}^{(m)}_{(\fat{\mathcal{R}}-\overline{\fat{\mathcal{R}}})p}} \leq \norm{(\fat{\mathcal{R}}-\overline{\fat{\mathcal{R}}})\,p}_\infty < +\infty\,.
        \end{align*}
        These inequalities remain valid for a general element of $\A'^\oplus$, which is composed, at each level $m\in\N$, by the same finite sum of finite products of Toeplitz operators acting on $A^2_m(\B^n)$. Since $\A'^\oplus$ form a $*$-algebra, we conclude that it is a $*$-subalgebra of $\A^\oplus$ and from \propref{propSumST}, $(\A'^\oplus,\H^\oplus,\overline{\DD^\oplus})$ is a spectral triple. \\
        We now compute its spectral dimension.
        For $s\in\R$, we have 
        \begin{align*}
          \Tr \,\abs{\DD^\oplus}^{-s}  & = \sum_{m\in\N} \alpha_m^{-s} \, \Tr \,(\mathbf{T}^{(m)}_{-r})^{s} =  \sum_{m\in\N} (\tfrac{\alpha_m}{m+1})^{-s} \, \Tr \,(\fat{\mathcal{R}}+m+n+1)^{-s} \\
          & = \sum_{m\in\N}\sum_{k\in\N} \tbinom{k+n-1}{n-1} (k+m+n+1)^{-s}\,.
        \end{align*}
        For any $k\in\N$ and $s>1$, we have
        \begin{align*}
          \int_{k+m+n}^{k+m+n+1} x^{-s}dx < (k+m+n+1)^{-s} < \int_{k+m+n+1}^{k+m+n+2} x^{-s}dx\,,
        \end{align*}
        so summing over $m\in\N$ leads to
        \begin{align*}
           \tfrac{1}{s-1}(k+n)^{1-s} < \sum_{m\in\N}(k+m+n+1)^{-s} < \tfrac{1}{s-1}(k+n+1)^{1-s}\,.
        \end{align*}
	Since $\tbinom{k+n-1}{n-1} \!\!\underset{k\to+\infty}{\sim} \!\!\tfrac{k^{n-1}}{(n-1)!}$, the operator $\abs{\DD^\oplus}^{-s}$ if and only if
	\begin{align*}
	  \tfrac{1}{(1-s)(n-1)!}\sum_{k\in\N} (k+1)^{n-1} (k+n+1)^{1-s} < +\infty \quad \Leftrightarrow \quad \sum_{m\in\N} k^{n-s} < +\infty \,,
	\end{align*}
	i.e. for $s>n+1$.
      \end{proof}

      \begin{remark}
        The previous result is restricted to the case of polynomial symbols. Indeed, we cannot apply the Stone--Weierstrass theorem in order to extend the result for general smooth functions over $\overline{\B^n}$ since $f \mapsto \commut{(\mathbf{T}_{-r}^{(m)})^{-1}}{\mathbf{T}_f}$ is not continuous on $A^2_m(\B^n)$ for the norm $\norm{.}_\infty$.
      \end{remark}

      A possible extension of \propref{propExampleSumTriple}, in which any $(a_m)_{m\in\N}\in\A'^\oplus$ is defined as the copy of the same element on each level $m\in\N$, consists of replacing a finite number of $a_m$ by arbitrary elements of $\A_m$. \\
      Thus the representation of an element $a^\oplus$ of this new algebra ${\A''}^\oplus$ is of the form
      \begin{align*}
	{\pi''}^\oplus({a}^\oplus) = \bigoplus_{m\leq N} \pi_m(a_m) \oplus \bigoplus_{m>N} \,\sum_{i=1}^p\prod_{j=1}^{q_i} \mathbf{T}^{(m)}_{p_{ij}}\,,
      \end{align*}
      for some integer $N$, some arbitrary $a_m\in\A_m$, $m\leq N$, and fixed family of polynomials $p_{ij}$ in $\Pol(\mathbb{B}^n)$, $i=1,\dots,p$, $j=1,\dots,q_i$.
      
      We can also consider a more general sequence $(\alpha_m)_{m\in\N}$ such that $\alpha_m \sim m^{\delta}$, as $m$ tends to infinity, for $0< \delta \leq 1$ (the upper bound comes from the boundedness of the commutator between the representation of an element of the algebra and $\DD^\oplus$). Then, the conclusions of \propref{propSumST} remain valid but the spectral dimension lies in $[n+1,+\infty)$.
      
      The ``$n+1$ phenomenon'' also appears in the spectral dimension of the spectral triple built from the Berezin--Toeplitz quantization \cite[Section 6]{EFI2012}. For short, the latter spectral triple can be expressed as a summation of spectral triples on a smoothly bounded strictly pseudoconvex domain like the ones in \propref{propTripleBergman}. This can be equivalently seen as a spectral triple over the boundary of a disk bundle over the domain, whose spectral dimension is exactly $n+1$, and which brings a geometric explanation for the ``extra dimension''.

      Spectral triples with arbitrary real positive spectral dimension have already be encountered in C. Ivan and E. Christensen's paper \cite{CI2006}; the construction uses algebras of continuous functions over the Cantor set.

  \section{Integration along decomposable self-similar sets in the plane}
    \label{secFractalIntegration}
    
    \subsection{Spectral triple on fractal sets generated by a polygonal Jordan curve}
      \label{secIntegrationPolygonal}

      \begin{definition}
	\label{defJordanSSS}
	Let $\mathcal{S}$ be the set of families $(F_k)_{k=1,\dots,N}$ of similarities on $\C$ such that
	\begin{enumerate}[label=\itshape\roman*\upshape)]
	  \item for any $k=1,\dots,N$, $F_k$ have the same ratio $c\in (0,1)$: 
	    \begin{align*}
	      F_k(z) = a_kz +b_k\,, \quad  \quad z\in\C\,, \text{ with } \abs{a_k}=c\,,
	    \end{align*}
	  \item there is a non-empty open bounded set $V\subset \C$ such that $\bigsqcup_{k=1}^N F_k(V) \subset V$ (open set condition), 
	  \item there is a polygonal Jordan curve $E_0$ defined by the points $(p_j)_{j=1,\dots,M}$, $M>2$, such that the attractor $E$ of $(F_k)_{k=1,\dots,N}$ can be decomposed as
	  \begin{align}
	    \label{eqDecompositionE}
	    E = \overline{\bigcup_{m\in\N}\,\bigcup_{\omega\in\{1,\dots,N\}^m} \!\!\!\!\!\!\! F_{\omega_1}\circ\dots\circ F_{\omega_m}(E_0)}\,.
	  \end{align}
	\end{enumerate}
	
	The set $E_0$ is called the generator.
      \end{definition}
      The set $E$ is a non-empty closed bounded set in the metric space $\R^2$ \cite[3.1.(3)(\emph{i})]{H1981}. Moreover, since $E$ is defined from similarities of same ratios and verifies the open set condition, its Hausdorff dimension $\dim_H$ is given by (see \cite[Theorem 9.3]{F2003})
      \begin{align}
	\label{eqDimensionSSS}
	\dim_H(E)= \tfrac{\log(N)}{\log(1/c)}.
      \end{align}
    
      For the rest of this section, $(F_k)_{k=1,\dots,N}$ denotes an element of $\mathcal{S}$ with fixed ratio $c\in (0,1)$, $E_0$ a generator and $E$ the corresponding attractor. Let $(L_j)_{j=1,\dots,M-1}$ be the family of closed line segments between the points $p_j$ and $p_{j+1}$, and $L_M$ between $p_M$ and $p_1$. If $\abs{L_j}$ is the length of the segment $L_j$, we assume that the perimeter $\sum_{j=1}^M \abs{L_j}$ of $E_0$ is $2\pi$ and we denote $\theta_j \vc \sum_{\ell=1}^{j-1} \abs{L_\ell}$ for any $j=2,\dots,M$, and $\theta_1 = 0$.      
      For $j=1,\dots,M-1$, let $A_j$ be the closed arc of the unit circle $\partial\B$ between $e^{i\theta_{j}}$ and $e^{i\theta_{j+1}}$, and $A_M$ the one between the points $e^{i \theta_M}$ and 1. For any $m\in\N$, $\omega\in\{1,\dots,N\}^m$ and $j=1,\dots,M$, we use the following notations:
      \begin{equation}
	\label{eqNotations}
	\begin{aligned}
	  \B_m        & \vc \{z\in\C\,, \abs{z} < c^m\}\,, & \mathcal{C}_m  & \vc \partial\B_m \,,  & F_\omega    & \vc F_{\omega_1}\circ \dots \circ F_{\omega_m}\,,        & E_\omega & \vc F_{\omega}(E_0)\,,\\
	  p_{\omegaj} & \vc F_\omega(p_j) \,,              & L_{\omegaj}    & \vc F_\omega(L_j) \,, & A_{\omegaj} &\vc c^mA_j\,.                                                
	\end{aligned}
      \end{equation}
      (note that $A_{\omegaj}$ are the closed arcs whose union over $j$ is $\mathcal{C}_m$.)
      
     \begin{example}
      The Sierpinski gasket $E_{\mathcal{SG}}$ \cite{S1915} is the attractor of $(F_1,F_2,F_3) \in\mathcal{S}$, where $F_k$, $k=1,2,3$, is the homothety of center the $k$th vertex $p_k$ of an equilateral triangle $E_0$, and of ratio $c=1/2$. It can be expressed as the union between $E_0$ and all of its images by $F_\omega$, for any $\omega\in\{1,\dots,N\}^m$ and $m\in\N$, and its Hausdorff dimension is $\dim_H(E_{\mathcal{SG}}) = \tfrac{\log(3)}{\log(2)}$. On \figref{figSierpinskiGasket}, the sets $F_k(E_0)$ and $F_k\circ F_l(E_0)$, $k,l=1,2,3$, are denoted $E_k$ and $E_{kl}$ respectively. 
	\begin{figure}[H]
	  \centering
	  \begin{tikzpicture}[scale=0.8]
	    \draw[dashed,gray] (0,0) -- ++(60:3) -- ++(-60:3) -- cycle;
	    
	    \draw (1.5,0) -- ++(60:1.5) -- ++(180:1.5) -- cycle;
	    
	    \draw (0,0) node (a) {};
	    \draw (3,0) node (b) {};
	    \draw (1.5,2.5981) node (c) {};
	    
	    \draw (a) node[left] {~~$p_1$};
	    \draw (b) node[right] {$p_2$};
	    \draw (c) node[above] {$p_3$};
	    
	    \draw (0.3,2.5981/2) node {$E_0$};
	    
	    \draw (1.5,-0.7) node[color=black] {\footnotesize Step 0};
	  \end{tikzpicture}
	  \begin{tikzpicture}[decoration=recsierpinski2,scale=0.8]
	    \draw[dashed,gray] (0,0) -- ++(60:3) -- ++(-60:3) -- cycle;
	  
	    \draw decorate { (1.5,0) -- ++(60:1.5) -- ++(180:1.5) -- cycle };
	    
	    \draw (-0.25,0.74) node {\small$E_1$};
	    \draw (3.25,0.74) node {\small$E_2$};
	    \draw (0.5,1.94) node {\small$E_3$};
	    \draw (a) node[left] {~~{\color{white}$p_1$}};
	    \draw (b) node[right] {{\color{white}$p_2$}};
	    \draw (c) node[above] {{\color{white}$p_3$}};
	    
	    \draw (1.5,-0.7) node[color=black] {\footnotesize Step 1};
	  \end{tikzpicture}
	  \begin{tikzpicture}[decoration=recsierpinski2,scale=0.8]
	    \draw[dashed,gray] (0,0) -- ++(60:3) -- ++(-60:3) -- cycle;
	  
	    \draw decorate { decorate { (1.5,0) -- ++(60:1.5) -- ++(180:1.5) -- cycle} };
	    
	    \draw (-0.1,2.5981/5 - 0.1) node {\footnotesize$E_{11}$};
	    \draw (0.2,2.5981*2/5 - 0.1) node {\footnotesize$E_{13}$};
	    \draw (2.5981*2/5+0.1,-0.2) node {\footnotesize$E_{23}$};
	    
	    \draw (2.5981 * 3/5 +0.3, -0.2) node {\footnotesize$E_{21}$};
	    \draw (3+0.1,2.5981/5 - 0.1) node {\footnotesize$E_{22}$};
	    \draw (3-0.2,2.5981*2/5 - 0.1) node {\footnotesize$E_{23}$};

	    \draw (0.6,2.5981*4/5 - 0.4) node {\footnotesize$E_{31}$};
	    \draw (2.4,2.5981*4/5 - 0.4) node {\footnotesize$E_{21}$};
	    \draw (1,2.5981*4/5 + 0.3) node {\footnotesize$E_{33}$};
	    
	    \draw (a) node[left] {~~{\color{white}$p_1$}};
	    \draw (b) node[right] {{\color{white}$p_2$}};
	    \draw (c) node[above] {{\color{white}$p_3$}};
	    
	    \draw (1.5,-0.7) node[color=black] {\footnotesize Step 2};
	  \end{tikzpicture}
	  \begin{tikzpicture}[decoration=recsierpinski2,scale=0.8]
	    \draw[black] (0,0) -- ++(60:3) -- ++(-60:3) -- cycle;
	  
	    \draw decorate { decorate { decorate { decorate { (1.5,0) -- ++(60:1.5) -- ++(180:1.5) -- cycle} } } };
	    \draw (0.5,1.94) node {\small$E$};
	    
	    \draw (a) node[left] {~~{\color{white}$p_1$}};
	    \draw (b) node[right] {{\color{white}$p_2$}};
	    \draw (c) node[above] {{\color{white}$p_3$}};
	    
	    \draw (1.5,-0.7) node[color=black] {\color{white}\footnotesize Step 2};
	  \end{tikzpicture}
	  \caption{}
	  \label{figSierpinskiGasket}
	\end{figure}
    \end{example}
    
    In order to define Toeplitz operators on the polygonal Jordan curves $E_\omega$, we define a sufficiently nice homeomorphism from the circle $\mathcal{C}_m$ into $E_\omega$ based on M\"obius transforms that send holomorphically each arc $A_{\omegaj}$ into the line segments $[p_\omegaj,p_{\omega j+1}]$.
    
    \begin{lemma}
      \label{lemDefKappa}
      For any $m\in\N$, $\omega\in\{1,\dots,N\}^m$ and $j=1,\dots,M$, let the maps
      \begin{align*}
	\kappa_{\omegaj} (z) \vc \tfrac{(p_\omegaj - i/(\delta_\omegaj \tau_\omegaj))z \, + \, c^me^{i\theta_{j}}(p_\omegaj + i/(\delta_\omegaj \tau_\omegaj))     }{ z \,+\, c^m e^{i\theta_{j}}}\,,
      \end{align*}
      from $\mathcal{C}_m$ into $E_\omega$, where
      \begin{align*}
        &\delta_{\omegaj} \vc (p_{\omega j+1} -p_{\omegaj})^{-1}\,, \quad \tau_\omegaj \vc \tan((\theta_{j+1}-\theta_{j})/2)\,,\quad \text{ if }j=1,\dots,M-1\,, \text{ and }\\
        & \delta_{\omega M} \vc (p_{\omega 1} -p_{\omega M})^{-1}\,, \quad \tau_{\omega M} \vc \tan((\theta_{1}-\theta_{M})/2)\,.
      \end{align*}
      Then the map $\kappa_\omega$ defined as $\kappa_{\omega} \evalat{A_\omegaj} \vc \kappa_{\omegaj}$ is an homeomorphism from $\mathcal{C}_m$ into $E_\omega$.
    \end{lemma}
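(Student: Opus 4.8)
\emph{Proof strategy.} The plan is to show that each $\kappa_{\omegaj}$ restricts to a homeomorphism from the closed arc $A_{\omegaj}$ onto the edge $L_{\omegaj}=[p_{\omegaj},p_{\omega j+1}]$ of $E_\omega$, and then to paste these $M$ pieces together along the shared vertices. I would first record that each $\kappa_{\omegaj}$ is a genuine (nondegenerate) M\"obius transformation of $\C\cup\{\infty\}$: one needs $p_{\omega j+1}\neq p_{\omegaj}$, so that $\delta_{\omegaj}$ makes sense, and $\tau_{\omegaj}$ finite and nonzero, i.e. $0<\theta_{j+1}-\theta_j<\pi$ (throughout, indices are read mod $M$, with $\theta_{M+1}\vc 2\pi$ and $p_{\omega M+1}\vc p_{\omega 1}$, consistently with the stated values of $\delta_{\omega M},\tau_{\omega M}$). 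The lower bound is just $\abs{L_j}>0$; the upper bound is $\abs{L_j}<\tfrac12\sum_\ell\abs{L_\ell}=\pi$, which holds because $E_0$ is a \emph{simple} closed polygon, so the straight edge $L_j$ is shorter than the complementary polygonal path joining its endpoints. A one-line determinant computation for $\kappa_{\omegaj}=\tfrac{az+b}{cz+d}$ gives $ad-bc=-2i\,c^m e^{i\theta_j}/(\delta_{\omegaj}\tau_{\omegaj})\neq 0$, and the unique pole of $\kappa_{\omegaj}$ is $z_0\vc -c^m e^{i\theta_j}$, which lies \emph{on} the circle $\mathcal{C}_m$ since $\abs{z_0}=c^m$.

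Next I would work on one arc. Plugging the initial endpoint $z=c^m e^{i\theta_j}$ into $\kappa_{\omegaj}$ gives $\kappa_{\omegaj}(c^m e^{i\theta_j})=p_{\omegaj}$ at once, numerator and denominator each collapsing to a multiple of $2c^m e^{i\theta_j}$; plugging in the terminal endpoint $z=c^m e^{i\theta_{j+1}}$, factoring $e^{i(\theta_j+\theta_{j+1})/2}$ out of numerator and denominator, and using the half-angle identity $\tfrac{e^{i\theta_j}-e^{i\theta_{j+1}}}{e^{i\theta_j}+e^{i\theta_{j+1}}}=-i\tan\!\big(\tfrac{\theta_{j+1}-\theta_j}{2}\big)=-i\tau_{\omegaj}$ together with $\delta_{\omegaj}^{-1}=p_{\omega j+1}-p_{\omegaj}$ gives $\kappa_{\omegaj}(c^m e^{i\theta_{j+1}})=p_{\omegaj}+\delta_{\omegaj}^{-1}=p_{\omega j+1}$. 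Then I would invoke the standard fact that a M\"obius map carries the family of lines and circles to itself and sends a circle through its pole to a line: since $z_0\in\mathcal{C}_m$, the set $\kappa_{\omegaj}(\mathcal{C}_m\setminus\{z_0\})$ is a line, and by the two endpoint values it is the line through $p_{\omegaj}$ and $p_{\omega j+1}$, i.e. the support of $L_{\omegaj}$. Because $\abs{L_j}<\pi$, the closed arc $A_{\omegaj}$ — of angular length $\abs{L_j}$ and starting at $c^m e^{i\theta_j}$ — does not contain the antipode $z_0$ of its initial point, so $\kappa_{\omegaj}$ is continuous and injective on the compact arc $A_{\omegaj}$; being a continuous injection from a compact space into the Hausdorff plane, it is a homeomorphism onto its image, which is thus an arc sitting inside a line, hence a closed segment, whose two endpoints are the images $p_{\omegaj}$ and $p_{\omega j+1}$ of the two endpoints of $A_{\omegaj}$. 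That segment is exactly $L_{\omegaj}=F_\omega([p_j,p_{j+1}])$, so $\kappa_{\omegaj}\colon A_{\omegaj}\to L_{\omegaj}$ is a homeomorphism.

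With the local pieces in hand, I would glue. The arcs $A_{\omega 1},\dots,A_{\omega M}$ form a finite closed cover of $\mathcal{C}_m$ in which consecutive arcs overlap only in the single point $A_{\omegaj}\cap A_{\omega,j+1}=\{c^m e^{i\theta_{j+1}}\}$, and there the two relevant maps agree, $\kappa_{\omegaj}(c^m e^{i\theta_{j+1}})=p_{\omega j+1}=\kappa_{\omega,j+1}(c^m e^{i\theta_{j+1}})$; so the pasting lemma makes $\kappa_\omega$ a well-defined continuous map on $\mathcal{C}_m$, whose image is $\bigcup_{j=1}^M L_{\omegaj}=F_\omega\big(\bigcup_j L_j\big)=F_\omega(E_0)=E_\omega$. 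For injectivity of $\kappa_\omega$: if $\kappa_\omega(z)=\kappa_\omega(z')$ with $z\in A_{\omegaj}$ and $z'\in A_{\omega k}$, then for $j=k$ we get $z=z'$ by the previous step, while for $j\neq k$ the common value lies in $L_{\omegaj}\cap L_{\omega k}=F_\omega(L_j\cap L_k)$, which (by simplicity of $E_0$) is empty unless $j$ and $k$ are adjacent and is their common vertex otherwise; so, relabelling, $k=j+1$, the common value is $p_{\omega j+1}$, and since $p_{\omega j+1}$ is attained in each of $A_{\omegaj}$ and $A_{\omega,j+1}$ only at their shared endpoint, $z=z'$. Thus $\kappa_\omega$ is a continuous bijection from the compact space $\mathcal{C}_m$ onto the Hausdorff space $E_\omega$, hence a homeomorphism.

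The routine parts are the algebra (the two endpoint evaluations) and the soft topology (the pasting lemma, and ``a continuous bijection from a compact space to a Hausdorff space is a homeomorphism''). The one point I expect to need care is the control of the poles of the M\"obius factors: one really must check $\abs{L_j}<\pi$ so that $z_0=-c^m e^{i\theta_j}$ lies strictly off the closed arc $A_{\omegaj}$ — otherwise $\kappa_{\omegaj}$ would either degenerate to a constant (when $\abs{L_j}=\pi$) or pass through $\infty$ somewhere on $A_{\omegaj}$ (when $\abs{L_j}>\pi$), and the lemma would fail. This is exactly the spot where the hypothesis that $E_0$ is a (simple) Jordan polygon enters, through the strict triangle inequality for polygons.
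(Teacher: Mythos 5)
Your proof is correct, and in fact somewhat more complete than the one in the paper, but it reaches the key fact by a different route. The paper parametrizes each arc as $z(t)=c^me^{i(\theta_j+t\abs{L_j})}$, $t\in[0,1]$, and computes the image explicitly:
\begin{align*}
  \kappa_\omegaj(z(t)) = p_\omegaj + (p_{\omega j+1}-p_\omegaj)\,\tfrac{\tan(t\abs{L_j}/2)}{\tan(\abs{L_j}/2)}\,,
\end{align*}
which exhibits $\kappa_\omegaj\evalat{A_\omegaj}$ directly as a (monotone) parametrization of the segment $[p_\omegaj,p_{\omega j+1}]$; continuity, injectivity and surjectivity onto the segment are all read off from this one formula, and the paper then stops, leaving the gluing and the global homeomorphism claim essentially implicit. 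You instead evaluate $\kappa_\omegaj$ only at the two endpoints of $A_\omegaj$ and invoke the general theory of M\"obius transformations (nonzero determinant, pole $-c^me^{i\theta_j}$ lying on $\mathcal{C}_m$ but off the closed arc because $\abs{L_j}<\pi$, circles through the pole mapping to lines), then finish with compactness. What your approach buys is a cleaner conceptual picture and, more importantly, an explicit treatment of the points the paper glosses over: the pasting at the shared vertices, global injectivity via the simplicity of the Jordan polygon $E_0$, and the compact-to-Hausdorff argument upgrading the continuous bijection to a homeomorphism. You also correctly flag that the bound $\abs{L_j}<\pi$ needs the polygonal triangle inequality rather than just $M>2$ as the paper asserts. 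What the paper's computation buys is that the closed-form expression \eqref{eqExprKappa} is reused later (e.g.\ in the proof of \lemref{lemCommutatorBounded}), so deriving it here is not wasted effort. Both arguments are sound.
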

    \begin{proof}
      Since $M>2$, $\abs{L_j}=\theta_{j+1}-\theta_j < \pi$ so $\tau_{\omegaj} < +\infty$ for any $j=1,\dots,M$.
      Expressing any point $z$ in $A_{\omegaj}$ as 
      \begin{align}
	\label{eqzInArc}
        z=z(t) = c^m e^{i(\theta_j + t\abs{L_j})} \,, \quad \text{ for }t\in [0,1]\,,
      \end{align}
      we have for any $t\in [0,1]$
      \begin{align}
        \kappa_\omegaj(z(t)) & = \tfrac{(p_\omegaj - i/(\delta_\omegaj \tau_\omegaj))c^m e^{i(\theta_j + t\abs{L_j})} \, + \, c^me^{i\theta_{j}}(p_\omegaj + i/(\delta_\omegaj \tau_\omegaj))     }{ c^m e^{i(\theta_j + t\abs{L_j})} \,+\, c^m e^{i\theta_{j}}} \nonumber \\
        & = \tfrac{ (p_\omegaj - i/(\delta_\omegaj\tau_\omegaj))e^{it\abs{L_j}} + p + i/(\delta_\omegaj \tau_\omegaj)   }{ e^{it\abs{L_j}} + 1 } = p_\omegaj + \tfrac{i}{\delta_\omegaj\tau_\omegaj} \tfrac{1-e^{it\abs{L_j}}}{1+e^{it\abs{L_j}}} \nonumber \\
        \label{eqExprKappa}
        & = p_\omegaj + (p_{\omega j+1} - p_\omegaj) \tfrac{\tan( t \,\abs{L_j}/2  )}{\tan(\abs{L_j}/2)}\,.
      \end{align}
      Thus $\kappa_\omegaj$ maps continuously the closed arc $A_{\omegaj}$ into the line segment $[p_\omegaj,p_{\omega j+1}]$ for any $j=1,\dots,M-1$, so does $\kappa_M$ from $A_M$ into the line segment $[p_{\omega M}, p_{\omega 1}]$.
    \end{proof}
    
    Let $m\in\N$ and $\omega\in\{1,\dots,N\}^m$. The \emph{Hardy space} over the circle $\mathcal{C}_m$, denoted $H^2(\mathcal{C}_m)$, $\H_\omega$ or $\H_m$, is the space of functions $\phi$ that are holomorphic on the corresponding open disk $\B_m$ and such that 
    \begin{align*}
      \norm{\phi}_{\H_m}^2 \vc \sup_{0<\rho< c^m} \int_{0}^{2\pi} \abs{\phi(\rho e^{it})}^2 \, \tfrac{dt}{2\pi} < +\infty\,.
    \end{align*}
    The inner product in $\H_m$ between $\phi(z) = \sum_{k\in\N} \phi_k z^k$ and $\psi(z) = \sum_{k\in\N} \psi_k z^k$, $z\in\B_m$, is given by
    \begin{align*}
      \scalp{\phi}{\psi}_{\H_m} \vc \int_{0}^{2\pi} \phi(c^me^{it}) \, \overline{\psi(c^me^{it})} \, \tfrac{dt}{2\pi} = \sum_{k\in\N} \phi_k\overline{\psi}_k\, c^{2m k}\,.
    \end{align*}
    An orthonormal basis for $\H_m$ is given by the vectors $v_{mj}(z) \vc c^{-m}z^j$ and the reproducing kernel $S^{(m)}_z(w) = \sum_{k\in\N} c^{-2mk}\bar{z}^kw^k$ verifies $\phi(z) = \scalp{\phi}{S^{(m)}_z}_{\H_m}$ for any $\phi\in \H_m$ and $z\in\B_m$. The orthogonal projection $\Pi_m : L^2(\mathcal{C}_m) \to \H_m$ is called the Szeg\"o projector. Any bounded function $u$ on the circle $\mathcal{C}_m$ gives rise to the Toeplitz operator $T_u : \phi \mapsto \Pi_m(u\phi)$, $\phi\in \H_m$. Thus for any polynomial function $p \in \Pol(E_\omega)$, the function $p\circ \kappa_\omega$ is bounded on $\mathcal{C}_m$, and we can consider Toeplitz operators of the form
    \begin{align}
      \label{eqTomegaPol}
      T^{(\omega)}_{p\circ \kappa_\omega} : \H_m \ni \phi \mapsto \Pi_m \big((p\circ \kappa_\omega)\phi \big) \in \H_m\,.
    \end{align}
    The integral representation of the action of such operators is
    \begin{equation}
      \begin{aligned}
	\label{eqIntRepToeplitzHardy}
	\big(T^{(\omega)}_{p\circ \kappa_\omega} \phi \big)(z) & = \scalp{(p\circ \kappa_\omega) \phi}{S^{(m)}_z}_{\H_\omega} \\
	& = \sum_{j=1}^M \int_{\theta_j}^{\theta_{j+1}}\!\!\!\tfrac{dt}{2\pi} (p\circ \kappa_{\omegaj})(c^me^{it})\, \phi(c^me^{it}) \, \overline{S_{z}^{(m)}}(c^me^{it}) \,,\quad z\in \B_m\,.
      \end{aligned}
    \end{equation}
    As in previous section, define the operators  $\mathcal{R} \vc z\partial_z$ and $\overline{\mathcal{R}} \vc \bar{z}\partial_{\bar{z}}$ acting on $\H_\omega$.\\ Since $\Spec(\mathcal{R})=\N$ consists of the positive part of the usual Dirac operator on the circle $\mathcal{C}_m$, we choose for $\DD_\omega$ the following expression 
    \begin{align}
      \label{eqDm}
      \DD_\omega \vc \alpha_m \mathcal{R} + \beta_m\,,
    \end{align}
    where $(\alpha_m)_{m\in\N}$ and $(\beta_m)_{m\in\N}$ are two sequences of strictly positive real numbers: these sequences depend only on $m$ since each $\DD_\omega$ acts on a Hardy space over the same circle $\mathcal{C}_m$ of radius $c^{m}$. 
    
    \begin{lemma}
      \label{lemCommutatorBounded}
      Let $m\in\N$, $\omega\in\{1,\dots,N\}^m$ and $p\in \Pol(E_w)$. The operator $\commut{\DD_\omega}{T_{p\circ\kappa_\omega \evalat{\mathcal{C}_\omega}}^{(\omega)}}$ is bounded and 
      \begin{align*}
        \norm{\commut{\DD_\omega}{T_{p\circ\kappa_\omega \evalat{\mathcal{C}_\omega}}^{(\omega)}}} \leq \alpha_m \,K_p\,,
      \end{align*}
      for some constant $K_p>0$ independent on $m$.
    \end{lemma}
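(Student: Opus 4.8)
The plan is to reduce the whole estimate to a bound on the tangential derivative of the symbol $p\circ\kappa_\omega$ along the circle $\mathcal{C}_m$, in the spirit of \propref{propCommutatorInAlg}. Since $\DD_\omega=\alpha_m\,\mathcal{R}+\beta_m$ by \eqref{eqDm} and $\beta_m$ is a scalar, one has $\commut{\DD_\omega}{T^{(\omega)}_{p\circ\kappa_\omega}}=\alpha_m\,\commut{\mathcal{R}}{T^{(\omega)}_{p\circ\kappa_\omega}}$, so (as $\alpha_m>0$) it suffices to show that $\commut{\mathcal{R}}{T^{(\omega)}_{p\circ\kappa_\omega}}$ extends to a bounded operator with $\norm{\commut{\mathcal{R}}{T^{(\omega)}_{p\circ\kappa_\omega}}}\leq K_p$ for some $K_p>0$ depending only on $p$, on the generator $E_0$ and on $c$, but not on $m$ or $\omega$. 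Then $K_p$ itself is the constant in the statement.

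The first step I would carry out is the Hardy-space analogue of \propref{propCommutatorInAlg}: for every Lipschitz function $u$ on $\mathcal{C}_m$,
\begin{align*}
  \commut{\mathcal{R}}{T^{(\omega)}_u}=T^{(\omega)}_{(\mathcal{R}-\overline{\mathcal{R}})u}\qquad\text{on the span of the }v_{mj}\,,
\end{align*}
where $(\mathcal{R}-\overline{\mathcal{R}})u$ is shorthand for the tangential derivative $-i\,\tfrac{d}{d\theta}u$ of $u$ along $\mathcal{C}_m$, which is defined almost everywhere and lies in $L^\infty(\mathcal{C}_m)$. I would prove this by using that the Szeg\"o projector $\Pi_m$ commutes with $-i\tfrac{d}{d\theta}$ (both are diagonal in the Fourier basis $e^{int}$) and that on holomorphic functions $-i\tfrac{d}{d\theta}$ acts as $\mathcal{R}=z\partial_z$: for a polynomial $\phi$, the product $u\phi$ is Lipschitz on $\mathcal{C}_m$, hence $-i\tfrac{d}{d\theta}(u\phi)\in L^2(\mathcal{C}_m)$ and $T^{(\omega)}_u\phi=\Pi_m(u\phi)\in\dom(\mathcal{R})=\dom(\DD_\omega)$, and the Leibniz rule gives
\begin{align*}
  \mathcal{R}\,T^{(\omega)}_u\phi=\Pi_m\!\big(\!-i\tfrac{d}{d\theta}(u\phi)\big)=T^{(\omega)}_{(\mathcal{R}-\overline{\mathcal{R}})u}\,\phi+T^{(\omega)}_u\,\mathcal{R}\phi\,.
\end{align*}
Since $\Pi_m$ is an orthogonal projection, $\norm{T^{(\omega)}_v}\leq\norm{v}_{L^\infty(\mathcal{C}_m)}$ for any bounded symbol $v$; applied to $v=(\mathcal{R}-\overline{\mathcal{R}})u$ this shows that $\commut{\mathcal{R}}{T^{(\omega)}_u}$ extends to a bounded operator with $\norm{\commut{\mathcal{R}}{T^{(\omega)}_u}}\leq\norm{(\mathcal{R}-\overline{\mathcal{R}})u}_{L^\infty(\mathcal{C}_m)}$.

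It then remains to estimate $\norm{(\mathcal{R}-\overline{\mathcal{R}})(p\circ\kappa_\omega)}_{L^\infty(\mathcal{C}_m)}$ uniformly in $m$ and $\omega$. On each arc $A_{\omegaj}$, parametrizing $z=z(t)=c^m e^{i(\theta_j+t\abs{L_j})}$, $t\in[0,1]$, as in \lemref{lemDefKappa}, one has $\tfrac{d}{d\theta}=\tfrac{1}{\abs{L_j}}\tfrac{d}{dt}$, and by \eqref{eqExprKappa}
\begin{align*}
  \kappa_{\omegaj}(z(t))=p_{\omegaj}+(p_{\omega j+1}-p_{\omegaj})\,q_j(t)\,,\qquad q_j(t)\vc\tfrac{\tan(t\abs{L_j}/2)}{\tan(\abs{L_j}/2)}\in[0,1]\,.
\end{align*}
Differentiating $t\mapsto p(\kappa_{\omegaj}(z(t)))$ by the chain rule, and using $\abs{p_{\omega j+1}-p_{\omegaj}}=c^m\abs{L_j}$ (the $F_k$ have ratio $c$), $\sup_{t\in[0,1]}\abs{q_j'(t)}\leq\gamma_j\vc\tfrac{\abs{L_j}/2}{\cos^2(\abs{L_j}/2)\tan(\abs{L_j}/2)}<+\infty$ (finite since $\abs{L_j}=\theta_{j+1}-\theta_j<\pi$, as noted in the proof of \lemref{lemDefKappa}), and $\kappa_{\omegaj}(z(t))\in[p_{\omegaj},p_{\omega j+1}]\subset E_\omega\subset E$, one gets
\begin{align*}
  \big|(\mathcal{R}-\overline{\mathcal{R}})(p\circ\kappa_{\omegaj})\big|=\tfrac{1}{\abs{L_j}}\,\Big|\tfrac{d}{dt}\,p\big(\kappa_{\omegaj}(z(t))\big)\Big|\leq\tfrac{1}{\abs{L_j}}\,M_p\,c^m\abs{L_j}\,\gamma_j\leq M_p\,\gamma_j\,,
\end{align*}
where $M_p\vc\max_{w\in E}\big(\abs{\partial_z p(w)}+\abs{\partial_{\bar z}p(w)}\big)<+\infty$ because $\partial_z p$, $\partial_{\bar z}p$ are polynomials and $E$ is bounded, and where $c^m\leq1$ was used. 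Taking $K_p\vc M_p\,\max_{1\leq j\leq M}\gamma_j$, which depends only on $p$, $E_0$ and $c$, and combining with the previous steps gives $\norm{\commut{\DD_\omega}{T^{(\omega)}_{p\circ\kappa_\omega}}}\leq\alpha_m K_p$.

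The main obstacle is the first step: since the symbol $p\circ\kappa_\omega$ is not a polynomial and is not even $C^1$ at the images of the vertices $p_j$, the algebraic shift-operator computation of \propref{propCommutatorInAlg} is not available and one must argue through Fourier coefficients / the Leibniz rule, and in particular check that $\kappa_\omega$ is genuinely Lipschitz on $\mathcal{C}_m$ — which holds because each $\kappa_{\omegaj}$ is a M\"obius map whose pole sits at angle $\theta_j+\pi$, lying outside the closed arc $A_{\omegaj}$ precisely because $\abs{L_j}<\pi$ (so the only possible loss of regularity is at the vertices, where $\kappa_\omega$ is still continuous). Once that is in place, the final estimate is routine, the one point worth noting being that the factor $c^m$ coming from $\abs{p_{\omega j+1}-p_{\omegaj}}$, i.e.\ from the shrinking of $E_\omega$, cancels the factor $\abs{L_j}^{-1}$ produced by the angular reparametrization, which is exactly what makes the bound independent of $m$ (and of $\omega$).
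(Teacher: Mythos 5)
Your proof is correct and reaches the paper's bound by the same overall strategy --- reduce to $\commut{\mathcal{R}}{T^{(\omega)}_{p\circ\kappa_\omega}}$, identify it with the Toeplitz operator whose symbol is the tangential derivative of $p\circ\kappa_\omega$, and bound that symbol uniformly in $m$ and $\omega$ --- but the way you establish the central identity is genuinely different. The paper proves $\commut{\mathcal{R}}{T_{\kappa_\omega^a\overline{\kappa_\omega}^b}}=T_{\pi_{\omega ab}}$ monomial by monomial, by computing the matrix elements of both sides on the basis $z^n$ via the Laurent coefficients of $\kappa_{\omegaj}^a$ on each arc and checking that the residual boundary terms telescope to zero over $j=1,\dots,M$ precisely because $\kappa_\omega$ is continuous at the vertices. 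You instead observe that the Szeg\"o projector is a Fourier multiplier and hence commutes with $-i\,d/d\theta$, and apply the Leibniz rule to the Lipschitz symbol $u=p\circ\kappa_\omega$; the continuity of $\kappa_\omega$ at the vertices enters through the statement that $u$ is genuinely Lipschitz (no delta contributions in $du/d\theta$), which is exactly the information the paper's telescoping cancellation encodes. Your route is shorter, applies to arbitrary Lipschitz symbols rather than only to powers of $\kappa_\omega$, and makes the role of the vertex continuity more transparent; the paper's computation has the advantage of producing the explicit formula for $\pi_{\omega ab}$ that it then estimates directly. Both arguments ultimately hinge on the same cancellation in the uniform bound: the factor $c^m$ coming from $\abs{p_{\omega j+1}-p_{\omegaj}}=c^m\abs{L_j}$ offsets the $\abs{L_j}^{-1}$ produced by the angular reparametrization (in the paper this appears as the factor $1/\delta_{\omegaj}$ in the formula for $\mathcal{R}\kappa_{\omegaj}$), which is what makes $K_p$ independent of $m$ and $\omega$.
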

    \begin{proof}
      We fix $m\in\N$, $\omega\in\{1,\dots,N\}^m$, and denote $T_u \vc T_u^{(w)}$, $u\in L^\infty(\mathcal{C}_\omega)$, for clarity reasons. If $p(z) = \sum_{a,b} p_{ab} \, z^k \,\bar{z}^l$ is a polynomial on the set $E_\omega$, we have $\commut{\mathcal{R}}{T_{p\circ \kappa_\omega}} = \sum_{a,b} p_{ab} \, \commut{\mathcal{R}}{T_{\kappa_\omega^a\overline{\kappa_\omega}^b}}$. We now show that for any $a,b\in\N$, 
      \begin{align}
	\label{eqEqualityCommutpiab}
        \commut{\mathcal{R}}{T_{\kappa_\omega^a\overline{\kappa_\omega}^b}} = T_{\pi_{\omega ab}}\,,
      \end{align}
      as operators acting on $\H_\omega$, where $\pi_{\omega ab}$ defined by
      \begin{align*}
	\pi_{\omega ab}\evalat{\mathring{A}_\omegaj} = (\mathcal{R}-\overline{\mathcal{R}})\kappa^a_\omegaj \overline{\kappa_\omegaj}^b\,,  \quad \text{ for any } j=1,\dots,M\,,
      \end{align*}
      where $\mathring{A}_\omegaj$ denotes the interior of the closed arc $A_\omegaj$, is extendible to a piecewise continuous, hence bounded, function on $\mathcal{C}_m$. For any $j=1,\dots, M$, the map $\kappa_{\omegaj}^a$ is holomorphic in a neighborhood of $A_\omegaj$, and we write $\kappa_{\omegaj}^a (z) = \sum_{k\in\N} \kappa_{\omegaj ak} \,z^k$, $z\in A_\omegaj$. For any $z\in\B_m$, $n\in\N$, and using \eqref{eqIntRepToeplitzHardy}, we have on one hand
      \begin{align*}
        T_{\kappa^a_\omega \overline{\kappa_\omega}^b} z^n & = \sum_{j=1}^M\int_{\theta_j}^{\theta_{j+1}} \!\!\!\tfrac{dt}{2\pi} \sum_{k\in\N} \kappa_{\omegaj ak}\,c^{mk}e^{itk}\, \overline{ \sum_{l\in\N} \kappa_{\omegaj bl} \,c^{ml}e^{itl}} \, c^{mn}e^{itn}\,\sum_{s\in\N} c^{-2ms}z^s\,c^{ms}e^{-its}\\
        & = \sum_{s\in\N}c^{m(n-s)}z^s \, \sum_{j=1}^M \int_{\theta_j}^{\theta_{j+1}} \!\!\!\tfrac{dt}{2\pi} \sum_{k,l\in\N} \kappa_{\omegaj ak}\overline{\kappa_{\omegaj bl}} \, c^{m(k+l)}\, e^{it(k-l+n-s)} \\
        & = \sum_{s\in\N}c^{m(n-s)}z^s  \sum_{j=1}^M  \sum_{k,l\in\N} \!\!\kappa_{\omegaj ak}\overline{\kappa_{\omegaj bl}}\, c^{m(k+l)} \\ 
        & \hspace{1cm} \Big( \tfrac{e^{i\theta_{j+1}(k-l+n-s)} - e^{i\theta_{j}(k-l+n-s)}}{i(k-l+n-s)} \, \delta(k+n\neq l+s)  + (\theta_{j+1}-\theta_j)\,\delta(k+n = l+s) \, \Big)\,.
      \end{align*}
      On the other hand, since on each $A_\omegaj$ we have $(\mathcal{R}-\overline{\mathcal{R}})\kappa_{\omegaj}^a\overline{\kappa_\omegaj}^b = (\mathcal{R}\kappa_\omegaj^a)\,\overline{\kappa_\omegaj}^b - \kappa_\omegaj^a \overline{\mathcal{R}}\,\overline{\kappa_\omegaj}^b$, we get for any $z\in\B_m$ and $n\in\N$
      \begin{align*}
        T_{\pi_{\omega ab}} z^n & = \sum_{j=1}^M\int_{\theta_j}^{\theta_{j+1}} \!\!\!\tfrac{dt}{2\pi} \sum_{k,l\in\N} (k-l)\,\kappa_{\omegaj ak}\overline{\kappa_{\omegaj bl}}\,c^{m(k+l)}\, c^{mn}e^{itn}\,\sum_{s\in\N} c^{-2ms}z^s\,c^{ms}e^{-its}\\
        & = \sum_{s\in\N}c^{m(n-s)}z^s  \sum_{j=1}^M  \sum_{k,l\in\N} \!\!\kappa_{\omegaj ak}\overline{\kappa_{\omegaj bl}}\, c^{m(k+l)} (k-l)\\ 
        & \hspace{1cm} \Big( \tfrac{e^{i\theta_{j+1}(k-l+n-s)} - e^{i\theta_{j}(k-l+n-s)}}{i(k-l+n-s)} \, \delta(k+n\neq l+s)  + (\theta_{j+1}-\theta_j)\,\delta(k+n = l+s) \, \Big)\,.
      \end{align*}
      Thus, setting $B \vc T_{\pi_{\omega ab}}  \!\!- \!\commut{\mathcal{R}}{T_{\kappa^a_\omega \overline{\kappa_\omega}^b}}$, we obtain
      \begin{align*}
         Bz^n & =  \sum_{s\in\N}c^{m(n-s)}z^s  \sum_{j=1}^M  \sum_{k,l\in\N} \!\!\kappa_{\omegaj ak}\overline{\kappa_{\omegaj bl}}\, c^{m(k+l)} (k-l+n-s)\\ 
        & \hspace{4cm}  \tfrac{e^{i\theta_{j+1}(k-l+n-s)} - e^{i\theta_{j}(k-l+n-s)}}{i(k-l+n-s)} \, \delta(k+n\neq l+s)\\
        & = -i\sum_{s\in\N}c^{m(n-s)}z^s  \sum_{j=1}^M \sum_{k,l\in\N} \!\!\kappa_{\omegaj ak}\overline{\kappa_{\omegaj bl}}\, c^{m(k+l)} (e^{i\theta_{j+1}(k-l+n-s)} - e^{i\theta_{j}(k-l+n-s)}) \\
        & = -i\sum_{s\in\N}c^{m(n-s)}z^s e^{i}  \sum_{j=1}^M \big(\kappa_{\omegaj}^a(p_{\omega j+1})\overline{\kappa_\omegaj}^b(p_{\omega j+1})e^{i\theta_{j+1}(n-s)} -\kappa^a_{\omegaj}(p_{\omegaj})\overline{\kappa_\omegaj}^b(p_{\omegaj})e^{i\theta_{j}(n-s)} \big)\,.
      \end{align*}
      Since $\kappa_\omega$ is continuous on $\mathcal{C}_m$, $\kappa_\omegaj(p_{\omega j+1}) = \kappa_{\omega j+1}(p_{\omega j+1})$ for any $j=1,\dots,M-1$, and $\kappa_{\omega M}(p_{\omega 1}) = \kappa_{\omega 1}(p_{\omega 1})$, so the summation over $j$ on the right-hand side vanishes and we proved \eqref{eqEqualityCommutpiab}.\\
      By linearity, we have $\commut{\DD_\omega}{T_{p\circ \kappa_\omega\evalat{\mathcal{C}_m}}} = \alpha_m\,\sum_{a,b}p_{ab} \, T_{\pi_{\omega ab}}$, which is a bounded operator on $\H_\omega$ with
      \begin{align*}
        \norm{\commut{\DD_\omega}{T_{p\circ \kappa_\omega\evalat{\mathcal{C}_m}}}} \leq \alpha_m \sum_{a,b}\abs{p_{ab}} \norm{\pi_{\omega ab}}_\infty\,.
      \end{align*}
      We have for any $z\in \mathring A_\omegaj$
      \begin{align*}
	\mathcal{R}\kappa_{\omegaj}(z) = z\partial_z\tfrac{(p_\omegaj - i/(\delta_\omegaj \tau_\omegaj))z \, + \, c^me^{i\theta_{j}}(p_\omegaj + i/(\delta_\omegaj \tau_\omegaj))     }{ z \,+\, c^m e^{i\theta_{j}}} = \tfrac{-2i/(\delta_\omegaj\tau_\omegaj)c^me^{i\theta_j}z}{(z+c^me^{i\theta_j})^2}\,,
      \end{align*}
      so using \eqref{eqzInArc} we get
      \begin{align*}
        \mathcal{R}\kappa_{\omegaj}(t) = \tfrac{-2i}{\delta_{\omegaj} \tau_{\omegaj}} \tfrac{e^{it \,\abs{L_j}}}{(1+e^{it \,\abs{L_j}})^2}\,, \quad \text{ hence } \quad \abs{\mathcal{R}\kappa_{\omegaj}}(t) \leq \tfrac{2}{\tau_M \abs{p_{j+1}-p_j} k_j} \,, \quad \forall t\in (0,1)\,,
      \end{align*}
      with $k_j \vc \inf_{t\in (0,1)} (1+e^{it\, \abs{L_j}})^2 > 0$. Moreover, since $E$ is a compact set in $\C$, there is a constant $K>0$ such that for any $m\in\N$ and $\omega\in \{1,\dots,N\}^m$, $\norm{\kappa_{\omega}}_\infty \leq K$. Thus       \begin{align*}
        \norm{\pi_{\omega ab}}_\infty & \leq \sup_{j=1,\dots,M} \norm{ a(\mathcal{R}\kappa_\omegaj)\kappa_\omegaj^{a-1}\overline{\kappa_\omegaj}^b \evalat{\mathring A_{\omegaj}}}_\infty +\norm{ b\kappa_\omegaj^a(\overline{\mathcal{R}}\overline{\kappa_\omegaj})\overline{\kappa_\omegaj}^{b-1} \evalat{\mathring A_{\omegaj}}}_\infty\\
        & \leq \sup_{j=1,\dots,M} \tfrac{2(a+b) K^{a+b-1}}{\tau_{\omegaj}\abs{p_{\omegaj +1} - p_\omegaj} k_j} \cv K'_{ab}\,. 
      \end{align*}
      Finally, we take $K_p \vc \sum_{a,b}\abs{p_{ab}} K'_{ab}$ and the proof is complete.
    \end{proof}
    
    \begin{proposition}
      \label{propSTAwHwDw}
      Let $\omega \in \{1,\dots,N\}^m$, $m\in\N$. Let $\A_\omega$ be the algebra generated by Toeplitz operators of the form \eqref{eqTomegaPol}, $\H_\omega \vc H^2(\mathcal{C}_m)$ and $\DD_\omega$ as in \eqref{eqDm}.\\
      Then for any sequences $(\alpha_m)_{m\in\N}$, $(\beta_m)_{m\in\N}$ of stricly positive real numbers, $(A_\omega, \H_\omega, \DD_\omega)$ is a spectral triple of spectral dimension $1$.
    \end{proposition}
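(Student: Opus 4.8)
The plan is to verify the three defining properties of a (nonunital) spectral triple for $(\A_\omega,\H_\omega,\DD_\omega)$ and then to evaluate $\Tr\,\abs{\DD_\omega}^{-s}$. First I would fix the spectral model of $\DD_\omega$: the monomials $v_{mj}(z)\vc c^{-m}z^j$, $j\in\N$, recalled above form an orthonormal basis of $\H_\omega=H^2(\mathcal C_m)$ and diagonalize $\mathcal R=z\partial_z$ with $\mathcal R v_{mj}=j\,v_{mj}$, so $\DD_\omega=\alpha_m\mathcal R+\beta_m$ is realized as the selfadjoint operator diagonalized by this basis, with $\DD_\omega v_{mj}=(\alpha_m j+\beta_m)v_{mj}$. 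Its spectrum is the discrete set $\{\alpha_m j+\beta_m:j\in\N\}$; since $\alpha_m,\beta_m>0$ these eigenvalues lie in $[\beta_m,+\infty)$ and tend to $+\infty$, so $\DD_\omega$ is invertible with compact resolvent, $\abs{\DD_\omega}=\DD_\omega$, and $\P\vc\Span\{v_{mj}:j\in\N\}$ is a core for it.

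Next, $\A_\omega$ is by construction a $*$-subalgebra of $\mathcal{B}(\H_\omega)$: each generator $T^{(\omega)}_{p\circ\kappa_\omega}$ is bounded because $p\circ\kappa_\omega\in L^\infty(\mathcal C_m)$ ($\kappa_\omega$ is continuous by \lemref{lemDefKappa} and $p$ is a polynomial), the involution is the Hilbert-space adjoint, and the representation $\pi_\omega=\mathrm{id}$ is faithful. For the commutator condition, \lemref{lemCommutatorBounded} already contains the essential computation: for a generator $T=T^{(\omega)}_{p\circ\kappa_\omega}$ it gives, on $\P$, that $\commut{\DD_\omega}{T}=\alpha_m\sum_{a,b}p_{ab}\,T_{\pi_{\omega ab}}$ is bounded with $\norm{\commut{\DD_\omega}{T}}\leq\alpha_m K_p$; iterating $\commut{\DD_\omega}{ST}=S\commut{\DD_\omega}{T}+\commut{\DD_\omega}{S}T$ then extends boundedness of $\commut{\DD_\omega}{\cdot}$ on $\P$ to every element of $\A_\omega$. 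To turn this into a genuine spectral triple I would argue exactly as in the proof of \propref{propSumST} (via \cite[Proposition A.1]{P2014}): boundedness of $\commut{\DD_\omega}{a}$ on the core $\P$ together with $a\,\P\subset\dom(\DD_\omega)$ implies that $a$ maps $\dom(\DD_\omega)$ into itself and that $\commut{\DD_\omega}{a}$ extends to a bounded operator on $\H_\omega$.

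For the spectral dimension: since $\abs{\DD_\omega}=\DD_\omega$ is diagonalized by $(v_{mj})_{j\in\N}$ with eigenvalues $\alpha_m j+\beta_m$,
\begin{align*}
  \Tr\,\abs{\DD_\omega}^{-s}=\sum_{j\in\N}(\alpha_m j+\beta_m)^{-s}\,,
\end{align*}
and comparing this series with $\int_0^{+\infty}(\alpha_m x+\beta_m)^{-s}\,dx$ for $s>1$, together with the lower bound $\sum_{j\in\N}(\alpha_m j+\beta_m)^{-s}\geq\sum_{j\in\N}(\alpha_m j+\beta_m)^{-1}=+\infty$ for $s\leq1$, shows that it is finite if and only if $s>1$; hence $d=\inf\{s\in\R:\Tr\,\abs{\DD_\omega}^{-s}<+\infty\}=1$.

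The one point requiring genuine care is the domain-stability inclusion $T\,\P\subset\dom(\DD_\omega)$: $\dom(\DD_\omega)=\dom(\mathcal R)$ is the space of holomorphic $\phi=\sum_k\phi_k z^k$ with $\sum_k k^2\abs{\phi_k}^2c^{2mk}<+\infty$, i.e. the ``$H^1$'' subspace of $\H_\omega$, and since $p\circ\kappa_\omega$ is Lipschitz on $\mathcal C_m$ (it is smooth on each closed arc $A_{\omegaj}$ and continuous across the corner points $p_\omegaj$) while $H^1$ of the circle is a Banach algebra stable under the Szeg\"o projection, $(p\circ\kappa_\omega)v_{mj}\in H^1$ and therefore $Tv_{mj}=\Pi_m\big((p\circ\kappa_\omega)v_{mj}\big)\in\dom(\mathcal R)$. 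I expect no other obstacle: the rest is either already in \lemref{lemCommutatorBounded} or an immediate consequence of the explicit diagonalization of $\DD_\omega$.
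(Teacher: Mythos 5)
Your proof is correct, but it follows a genuinely different route from the paper's. The paper disposes of the proposition in three lines: it embeds $\A_\omega$ into the algebra of generalized Toeplitz operators of order $0$ on the boundary of a strictly pseudoconvex domain, observes that $\DD_\omega$ is a selfadjoint elliptic generalized Toeplitz operator of order $1$, and cites the analogous statement \cite[Proposition 5.2]{EFI2012} in the Boutet de Monvel--Guillemin framework \cite{BdMG1981}, delegating only the commutator bound to \lemref{lemCommutatorBounded} (since here $\commut{\DD_\omega}{\A_\omega}$ does not stay in the algebra). You instead verify everything by hand from the explicit diagonalization $\DD_\omega v_{mj}=(\alpha_m j+\beta_m)v_{mj}$: compact resolvent, positivity and the value of the spectral dimension drop out of the eigenvalue asymptotics, the commutator bound is again \lemref{lemCommutatorBounded}, and the domain questions are settled by the core argument of \cite[Proposition A.1]{P2014} exactly as in \propref{propSumST}. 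The paper's approach buys generality (it would survive replacing the circle by any strictly pseudoconvex boundary and $\DD_\omega$ by any positive elliptic order-one generalized Toeplitz operator), while yours is elementary and self-contained and, notably, makes explicit the one point the citation glosses over: that $T^{(\omega)}_{p\circ\kappa_\omega}$ maps the polynomial core into $\dom(\mathcal R)$, which you correctly deduce from the piecewise-smooth, globally Lipschitz character of $p\circ\kappa_\omega$ together with the boundedness of the Szeg\"o projector on the first Sobolev space of the circle. Only a cosmetic quibble: the lower bound $(\alpha_m j+\beta_m)^{-s}\geq(\alpha_m j+\beta_m)^{-1}$ for $s\leq 1$ holds only once $\alpha_m j+\beta_m\geq 1$, i.e.\ for all but finitely many $j$ --- which is of course still enough to conclude divergence.
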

    \vspace{-5mm}
    \begin{proof}
      The circle $\mathcal{C}_\omega$ is the boundary of a strictly pseudoconvex domain with complex dimension 1 and $\A_\omega$ is a subalgebra of the algebra of generalized Toeplitz operators of order 0 (see \cite{BdMG1981}), and the operator $\DD_\omega$ is a selfadjoint elliptic generalized Toeplitz operator of order 1 on $\H_\omega$, so the proof is similar to the one of \cite[Proposition 5.2]{EFI2012}, except that here $\commut{\DD_\omega}{\A_\omega} \notin \A_\omega$. The boundedness of the commutator is nonetheless proved by \lemref{lemCommutatorBounded}.
    \end{proof}
    
    We assumed that the attractor $E$ is the union of all the components $E_\omega$, so we sum the spectral triples obtained in \propref{propSTAwHwDw} in order to obtain a noncommutative spectral triple over the whole set $E$. It is still possible to adjust the coefficients $(\alpha_m)_{m\in\N}$ and $(\beta_m)_{m\in\N}$ so that the spectral dimension of the integrated spectral triple corresponds to $\dim_H(E)$.
    
    \begin{theorem}
      \label{theoSumSTIFS}
      Let $(F_k)_{k=1,\dots,N}$ be an element of $\mathcal{S}$ with ratio $c\in(0,1)$ such that $1<cN$. Let
      \begin{itemize}
        \item $\H^\oplus \vc \bigoplus_{m\in\N} \bigoplus_{\omega \in \{1,\dots,N\}^m} \H_\omega$, with $\H_\omega \vc H^2(\mathcal{C}_m)$,
        \item $\A^\oplus$ be the algebra generated by Toeplitz operators of the form 
        \begin{align*}
          T^\oplus_p \vc \bigoplus_{m\in\N} \bigoplus_{\omega \in \{1,\dots,N\}^m} \!\!\!\!T^{(\omega)}_{p \circ \kappa_\omega}\,, \quad \text{ with $p \in \Pol(E)$,}
        \end{align*}
        \item $\DD^\oplus \vc \bigoplus_{m\in\N} \bigoplus_{\omega \in \{1,\dots,N\}^m}  \DD_\omega$, where $\DD_\omega \vc \alpha_m\mathcal{R}_\omega + \beta_m$, for some sequences $(\alpha_m)_{m\in\N}$, $(\beta_m)_{m\in\N}$ of strictly positive real numbers and with domain
        \begin{align*}
	  \dom(\DD^\oplus) \vc \{v^\oplus \vc \bigoplus_{m=0}^N\bigoplus_{\omega \in  \{1,\dots,N\}^m} \!\!\!\!\!\!\!\! v_\omega \,, N\in\N\,, v_\omega \in \dom(\DD_\omega)\}\,.
        \end{align*}
      \end{itemize}
      Then $\DD^\oplus$ is essentially selfadjoint and one can choose the sequences $(\alpha_m)_{m\in\N}$ and $(\beta_m)_{m\in\N}$ so that $(\A^\oplus,\H^\oplus,\overline{\DD^\oplus})$ is a spectral triple of spectral dimension $\dim_H(E)$. 
     \end{theorem}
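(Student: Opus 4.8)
The plan is to deduce the statement from the abstract results \lemref{lemDoplusEssSelfAdjoint} and \propref{propSumST} after re-indexing the word-indexed family $(\H_\omega,\DD_\omega)_\omega$ by $\N$, and then to tune the sequences $(\alpha_m)$ and $(\beta_m)$ so that the associated zeta function has its first pole exactly at $\dim_H(E)$. For essential selfadjointness, observe that the index set $\bigsqcup_{m\in\N}\{1,\dots,N\}^m$ is countable; fixing a level-respecting enumeration of it (all words of length $0$, then all words of length $1$, and so on), the domain $\dom(\DD^\oplus)$ of the statement becomes precisely the ``finite partial sums'' domain appearing in \lemref{lemDoplusEssSelfAdjoint}, and each $\DD_\omega=\alpha_m\mathcal{R}_\omega+\beta_m$ is selfadjoint because $\mathcal{R}_\omega=z\partial_z$ is selfadjoint on $H^2(\mathcal{C}_m)$ with simple spectrum $\N$ and orthonormal eigenbasis $(v_{mj})_j$. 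Hence \lemref{lemDoplusEssSelfAdjoint} applies without change and $\DD^\oplus$ is essentially selfadjoint with closure $\overline{\DD^\oplus}$, whatever the strictly positive sequences $(\alpha_m)$, $(\beta_m)$.

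The spectrum of $\DD_\omega$ is $\{\alpha_m k+\beta_m:k\in\N\}\subset[\beta_m,+\infty)$, so $\norm{(1+\DD_\omega^2)^{-1/2}}=(1+\beta_m^2)^{-1/2}$ and the analogue of condition \eqref{condInv} holds as soon as $\beta_m\to+\infty$. By \lemref{lemCommutatorBounded}, $\norm{\commut{\DD_\omega}{T^{(\omega)}_{p\circ\kappa_\omega}}}\le\alpha_m K_p$ with $K_p$ independent of $m$ and $\omega$, so, via the Leibniz rule for products and linearity for sums, all commutators of $\DD^\oplus$ with elements of $\A^\oplus$ are uniformly bounded provided $\sup_m\alpha_m<+\infty$; likewise $\norm{T^{(\omega)}_{p\circ\kappa_\omega}}\le\norm{p\circ\kappa_\omega}_{L^\infty(\mathcal{C}_m)}\le\sup_{z\in E}\abs{p(z)}$ by \eqref{eqPropToeplitz} and compactness of $E$ (with $E_\omega\subset E$), so the representations are uniformly bounded, and $\A^\oplus$ is stable under adjoints, again by \eqref{eqPropToeplitz}. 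Therefore, choosing for instance $\alpha_m\vc 1$ and any $\beta_m\to+\infty$, \propref{propSTAwHwDw} supplies the level-$m$ spectral triples while $\A^\oplus$ is a $*$-subalgebra of the maximal integrable algebra of the re-indexed family, and \propref{propSumST} yields that $(\A^\oplus,\H^\oplus,\overline{\DD^\oplus})$ is a spectral triple. (Taking the constant polynomial $p=1$ gives $T^\oplus_1=\mathrm{id}$, so this triple is in fact unital; equivalently $\overline{\DD^\oplus}$ has compact resolvent, since only finitely many of the eigenvalues $\alpha_m k+\beta_m$ lie below a given bound, each with finite multiplicity $N^m$.)

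It remains to compute the spectral dimension. With $\alpha_m=1$ one has, for every $s>1$,
\begin{align*}
\Tr\,\abs{\overline{\DD^\oplus}}^{-s}=\sum_{m\in\N}N^m\sum_{k\in\N}(k+\beta_m)^{-s}\,,
\end{align*}
and comparison with $\int_0^{\infty}(x+\beta_m)^{-s}\,dx=\beta_m^{1-s}/(s-1)$ gives $\sum_{k\in\N}(k+\beta_m)^{-s}\asymp\beta_m^{1-s}$ once $\beta_m\ge1$, with comparison constants depending only on $s$; for $s\le1$ the inner series already diverges. Hence $\Tr\,\abs{\overline{\DD^\oplus}}^{-s}<+\infty$ if and only if $\sum_{m\in\N}N^m\beta_m^{1-s}<+\infty$. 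Write $d\vc\dim_H(E)=\log N/\log(1/c)$; the hypothesis $1<cN$ says exactly that $d>1$, so $d-1>0$ and one may take $\beta_m\vc N^{m/(d-1)}$, which tends to $+\infty$. Then $\sum_{m\in\N}N^m\beta_m^{1-s}=\sum_{m\in\N}N^{m(d-s)/(d-1)}$ is a geometric series, convergent if and only if $s>d$; so the spectral dimension of $(\A^\oplus,\H^\oplus,\overline{\DD^\oplus})$ equals $\dim_H(E)$.

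The main obstacle is the competition in the middle step: the geometrically natural normalisation would make $\DD_\omega$ of order $c^{-m}$ (the inverse diameter of the piece $E_\omega$), i.e.\ $\alpha_m\sim c^{-m}\to+\infty$, whereas \lemref{lemCommutatorBounded} forces $\alpha_m$ to remain bounded; consequently all the growth required by \eqref{condInv} and by the zeta function must be carried by $\beta_m$, and one has to verify that after this reallocation the first pole still sits at $\dim_H(E)$ --- which is possible precisely because $cN>1$ makes $d-1$ positive, so that the exponent in $\beta_m=N^{m/(d-1)}$ is legitimate and $\beta_m\to+\infty$. The remaining points (passing from a word-indexed to an $\N$-indexed family by a level-respecting enumeration so that \lemref{lemDoplusEssSelfAdjoint} and \propref{propSumST} apply verbatim, and handling general elements of $\A^\oplus$ by sub-multiplicativity and Leibniz) are routine.
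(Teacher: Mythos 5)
Your proof is correct and follows essentially the same route as the paper: essential selfadjointness via \lemref{lemDoplusEssSelfAdjoint}, uniform commutator bounds from \lemref{lemCommutatorBounded} forcing $\sup_m\alpha_m<+\infty$, compactness of the resolvent from $\beta_m\to+\infty$, and the spectral dimension from the integral-comparison asymptotics $\sum_k(k+\beta_m/\alpha_m)^{-s}\sim\tfrac{1}{s-1}(\beta_m/\alpha_m)^{1-s}$. Your specific choice $\alpha_m=1$, $\beta_m=N^{m/(d-1)}$ is exactly the endpoint $\ell=\log N/\log(cN)$ of the paper's one-parameter family $\alpha_m=c^{-\ell m}N^{-m(\ell-1)}$, $\beta_m=c^{-\ell m}$, and works just as well.
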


    \begin{proof}
      The attractor $E$ is compact, so for any $p\in\Pol(E)$, the norm $\norm{T_p^{\oplus}} \leq \norm{p}_\infty$ is finite.\\
      Let 
      \begin{align}
        \label{eqConditionEll}
        \ell\in (\tfrac{\log(N)}{\log(cN)}, +\infty)  \,, \quad  \alpha_m \vc c^{-\ell m}N^{-m(\ell-1)}\,, \quad \text{ and } \quad \beta_m \vc c^{-\ell m}\,, \quad \forall m\in\N\,.
      \end{align}
      For any $m\in\N$ and $\omega \in \{1,\dots,N\}^m$, $\Spec(\DD_\omega)=\{\alpha_m j + \beta_m\,, j\in\N\} \subset \R^{+}\noNull$, so the operator $\DD_\omega^{-1}$ is compact and 
      \begin{align*}
        \norm{\DD_\omega^{-1}} = \sup_{j\in\N} (\alpha_m j + \beta_m)^{-1} = \beta_m^{-1} = c^{\ell m} \underset{m\to+\infty}{\longrightarrow} 0\,,
      \end{align*}
      hence $\DD^\oplus$ has compact resolvent. Using \lemref{lemCommutatorBounded}, for any $p\in\Pol(E)$, we have
      \begin{align*}
        \norm{\commut{\DD^\oplus}{T_p^\oplus}} = \sup_{m\in\N}\sup_{\omega\in\{1,\dots,N\}^m} \norm{\commut{\DD_\omega}{T_{p\circ\kappa_\omega}^{(\omega)}}} \leq K_p\sup_{m\in\N}\alpha_m \leq K_p
      \end{align*}
      (indeed $\alpha_0 = 1$ and $\tfrac{\log(N)}{\log(cN)} <\ell \Leftrightarrow \alpha_m < 1$ for any $m\neq 0$).\\
      From \propref{propSTAwHwDw}, the spectral dimension of $(\A_\omega, \H_\omega, \DD_\omega)$ is 1 for any $\omega\in\{1,\dots,N\}^m$, $m\in\N$, so we study $\Tr(\abs{\DD^{\oplus}}^{-s})$ for $s>1$:
      \begin{align*}
        \Tr(\abs{\DD^{\oplus}}^{-s}) = \sum_{m\in\N} N^m \sum_{j\in\N}(\alpha_m j +\beta_m)^{-s} = \sum_{m\in\N} N^m \alpha_m^{-s}\sum_{j\in\N}(j +\tfrac{\beta_m}{\alpha_m})^{-s}\,.
      \end{align*}
      A similar calculation as in the proof of \propref{propExampleSumTriple} shows that $\sum_{j\in\N}(j +\tfrac{\beta_m}{\alpha_m})^{-s}\!\!\!\!\underset{m\to+\infty}{\sim} \!\!\tfrac{1}{s-1}(\tfrac{\beta_m}{\alpha_m})^{1-s}$, so $\Tr(\abs{\DD^{\oplus}}^{-s})$ is finite if and only if 
      \begin{align*}
	\sum_{m\in\N} N^m\alpha_m^{-s} (\tfrac{\beta_m}{\alpha_m})^{1-s} = \sum_{m\in\N} (c^{\ell s} N^\ell )^m < + \infty\, \quad \Leftrightarrow \quad c^{\ell s}N^{\ell} < 1\,,
      \end{align*}
      i.e. for $s>\tfrac{\log(N)}{\log(1/c)} = \dim_H(E)$.
    \end{proof}
    Because of the condition $1<cN$, the previous operator $\DD^\oplus$ encodes the Hausdorff dimension of $E$ when the latter is strictly greater than 1. 

    \subsection{Spectral triple on fractal sets generated by the unit disk}
      
      The integration of \secref{secExampleIntegration} is obtained from a family of spectral triples over a fixed domain which is the unit disk $\B$, and the dimension is recovered after adjusting the family of weights and the sequence $(\alpha_m)_{m\in\N}$. Here, the considered domain is a union of disks of different sizes, forming a self-similar set. This approach of integration seems more natural in the sense that the geometrical structure of the fractal keeps us to put by hand the coefficients $\alpha_m$ on the operators $\DD_\omega$, and the only degree of freedom remains the choice of the weights on each disk. 
      
      In this section, we consider a family of similarities $(F_k)_{k=1,\dots,N}$ on $\C$ which verify \emph{i)} and \emph{ii)} from \defref{defJordanSSS}, whose attractor $E$ can be expressed as
      \begin{align*}
	E = \bigcup_{m\in\N}\,\bigcup_{\omega\in\{1,\dots,N\}^m} \!\!\!\!\!\!\! F_\omega(\B)\,
      \end{align*}
      (we keep the same notations as in \eqref{eqNotations}).
      
      Again, the attractor $E$ is a self-similar set and its Hausdorff dimension is given by \eqref{eqDimensionSSS}. The spectral triple over $E$ is obtained in a similar way as in \secref{secIntegrationPolygonal}, except that the algebras we consider here are the algebras generated by Toeplitz operators on Bergman spaces over the disks.

      For any $m\in\N$, we consider for the disk $\B_m$ the following defining function and weight 
      \begin{align*}
        r_{m}(z) \vc  \abs{z}^2-c^{2m}\,, \quad \text{ and } \quad \widetilde{w}_{m}(z) \vc (-r_m(z))^{N^m}\,, \quad z\in \overline{\B_m}\,,
      \end{align*}
      and we denote the corresponding weighted Bergman spaces $\widetilde{A}^2_m(\B_m)$. For any $m\in\N$ and $\omega\in\{1,\dots,N\}^m$, we consider the translation $\iota_\omega : z \mapsto z + q_\omega$ from $\B_m$ into $F_\omega(\B)$, where $q_\omega \vc F_\omega(0)$ is the center of the open disk $F_\omega(\B)$. Of course \propref{propTripleBergman} remains valid when $\H_m$ is replaced by $\widetilde{A}^2_m(\B_m)$, and we get
      
      \begin{proposition}
        Let 
        \begin{itemize}
          \item $\H^{\oplus} \vc \bigoplus_{m\in\N} \bigoplus_{\omega\in\{1,\dots,N\}}^{m} \H_m$, with $\H_m \vc \widetilde{A}^2_m(\B_m)$,
          \item $\A^{\oplus}$ be the $*$-algebra generated by operators on $\H^\oplus$ of the form 
          \begin{align*}
            \fatT_p^{\oplus} \vc \bigoplus_{m\in\N}\bigoplus_{\omega\in\{1,\dots,N\}^{m}}\fatT_{p\circ \iota_\omega \evalat{\B_m}}^{(\omega)}\,, \quad \text{ with } p\in\Pol(E)\,,
          \end{align*}
          \item $\DD^{\oplus} \vc \bigoplus_{m\in\N}\bigoplus_{\omega\in\{1,\dots,N\}^{m}}\DD_\omega$, with $\DD_\omega \vc (\fatT_{-r_{m}}^{(\omega)})^{-1}$.
        \end{itemize}
        Then, if $1<c^2N$, then $(\A^\oplus, \H^{\oplus}, \DD^{\oplus})$ is a spectral triple of spectral dimension $\dim_{H}(E)$.
      \end{proposition}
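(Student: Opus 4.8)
The plan is to identify each summand $(\A_\omega,\H_\omega,\DD_\omega)$ with a rescaled copy of the $n=1$ spectral triple of \propref{propTripleBergman} with weight parameter $N^m$, and then to run the integration machinery of \propref{propSumST} and \lemref{lemDoplusEssSelfAdjoint} on the countable family indexed by $\bigsqcup_{m\in\N}\{1,\dots,N\}^m$ (with all weights equal to $1$), computing the spectral dimension as in \propref{propExampleSumTriple}. To make the rescaling precise, fix $\omega\in\{1,\dots,N\}^m$. The change of variables $z=c^m\zeta$ sends the weight $\widetilde{w}_m(z)=(c^{2m}-\abs{z}^2)^{N^m}$ to a constant multiple of $(-r)^{N^m}(\zeta)$, $r(\zeta)=\abs{\zeta}^2-1$, so the suitably normalized dilation $\phi\mapsto\phi(c^m\,\cdot\,)$ is a unitary $U_\omega\colon\H_\omega=\widetilde{A}^2_m(\B_m)\to A^2_{N^m}(\B^1)$. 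Since $-r_m(z)=c^{2m}(-r)(\zeta)$, $U_\omega$ intertwines $\fatT^{(\omega)}_{-r_m}$ with $c^{2m}\fatT^{(N^m)}_{-r}$ and $\fatT^{(\omega)}_{p\circ\iota_\omega}$ with $\fatT^{(N^m)}_{\tilde p_\omega}$, where $\tilde p_\omega(\zeta)\vc p(c^m\zeta+q_\omega)\in\Pol(\B^1)$. Using the computation in the proof of \propref{propCommutatorInAlg} (with $n=1$ and weight parameter $N^m$), $(\fatT^{(N^m)}_{-r})^{-1}=\tfrac{1}{N^m+1}(\fatR+N^m+2)$ with $\fatR=\zeta\partial_\zeta$ of spectrum $\N$, hence
\[
  U_\omega\,\DD_\omega\,U_\omega^{-1}=c^{-2m}(\fatT^{(N^m)}_{-r})^{-1}=\tfrac{c^{-2m}}{N^m+1}(\fatR+N^m+2)\,,
\]
a positive selfadjoint operator with compact resolvent and eigenvalues $\lambda_{m,k}\vc\tfrac{c^{-2m}(k+N^m+2)}{N^m+1}$, $k\in\N$; in particular each $(\A_\omega,\H_\omega,\DD_\omega)$ is unitarily the spectral triple of \propref{propTripleBergman} for $n=1$, of spectral dimension $1$ (as observed just before the statement).

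Then I would verify the hypotheses of \propref{propSumST}. Condition \eqref{condInv} with unit weights holds because $\norm{(1+\DD_\omega^2)^{-1/2}}=(1+\lambda_{m,0}^2)^{-1/2}\le(1+c^{-4m})^{-1/2}\to0$ as the level $m\to\infty$, while only finitely many $\omega$ lie at each level; \lemref{lemDoplusEssSelfAdjoint} then shows $\DD^\oplus$ is essentially selfadjoint, and $\overline{\DD^\oplus}$ has compact resolvent. For a generator $\fatT^\oplus_p$, $p\in\Pol(E)$: the representation is bounded since $\norm{\fatT^{(\omega)}_{p\circ\iota_\omega}}\le\norm{p}_{L^\infty(E)}$ by \eqref{eqPropToeplitz} (as $F_\omega(\overline{\B})\subseteq E$); and for the commutator, transporting by $U_\omega$ and applying \propref{propCommutatorInAlg},
\[
  \norm{\commut{\DD_\omega}{\fatT^{(\omega)}_{p\circ\iota_\omega}}}=\tfrac{c^{-2m}}{N^m+1}\,\norm{\fatT^{(N^m)}_{(\fatR-\overline{\fatR})\tilde p_\omega}}\le\tfrac{c^{-2m}}{N^m+1}\,\norm{(\fatR-\overline{\fatR})\tilde p_\omega}_\infty\,.
\]
Since $(\fatR-\overline{\fatR})\tilde p_\omega(\zeta)=c^m\big(\zeta\,(\partial_z p)-\bar\zeta\,(\partial_{\bar z} p)\big)(c^m\zeta+q_\omega)$ and $c^m\zeta+q_\omega\in E$ for $\zeta\in\overline{\B^1}$, the last norm is $\le c^m C_p$ with $C_p\vc\norm{\partial_z p}_{L^\infty(E)}+\norm{\partial_{\bar z} p}_{L^\infty(E)}<+\infty$ ($E$ compact), so
\[
  \sup_{m\in\N}\,\sup_{\omega\in\{1,\dots,N\}^m}\norm{\commut{\DD_\omega}{\fatT^{(\omega)}_{p\circ\iota_\omega}}}\le C_p\,\sup_{m\in\N}(cN)^{-m}=C_p<+\infty\,,
\]
using that $1<c^2N\Rightarrow cN>1/c>1$. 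Both estimates pass to an arbitrary element of $\A^\oplus$ — a fixed finite sum of finite products of such Toeplitz operators on every level — by submultiplicativity and the Leibniz rule, exactly as in the proof of \propref{propSumST}. Hence $\A^\oplus$ is a $*$-subalgebra of the integration algebra produced by \propref{propSumST} from $(\A_\omega,\H_\omega,\DD_\omega)_\omega$, and $(\A^\oplus,\H^\oplus,\overline{\DD^\oplus})$ is a spectral triple.

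Finally I would compute the spectral dimension. Since every $\DD_\omega>0$ and there are $N^m$ words of length $m$,
\[
  \Tr\,\abs{\overline{\DD^\oplus}}^{-s}=\sum_{m\in\N}N^m\sum_{k\in\N}\lambda_{m,k}^{-s}=\sum_{m\in\N}N^m\,c^{2ms}(N^m+1)^s\sum_{k\in\N}(k+N^m+2)^{-s}\,,
\]
which is termwise finite only for $s>1$. As in the proof of \propref{propExampleSumTriple}, for $s>1$ one has $\sum_{k\in\N}(k+N^m+2)^{-s}\sim\tfrac{1}{s-1}(N^m)^{1-s}$ and $(N^m+1)^s\sim N^{ms}$ as $m\to\infty$, so the general term behaves like $\tfrac{1}{s-1}(c^{2s}N^2)^m$; thus $\Tr\,\abs{\overline{\DD^\oplus}}^{-s}<+\infty$ iff $c^{2s}N^2<1$, i.e. iff $s>\tfrac{\log N}{\log(1/c)}=\dim_H(E)$ by \eqref{eqDimensionSSS} (the hypothesis $1<c^2N$ makes $\dim_H(E)>2>1$, so the termwise constraint $s>1$ is automatically met). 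Hence the spectral dimension is $\dim_H(E)$.

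The one genuinely nonroutine step is the first: one must check that the dilation $z\mapsto c^m\zeta$ intertwines the weighted Bergman spaces and their Bergman projections cleanly enough that the explicit formula $(\fatT^{(k)}_{-r})^{-1}=\tfrac{1}{k+1}(\fatR+k+n+1)$ and the commutator identity of \propref{propCommutatorInAlg} carry over with only the benign scalars $c^{\pm2m}$. Everything after that — the two norm estimates above, the description of $\Spec(\DD_\omega)$, the essential selfadjointness, and the trace estimate — is the same bookkeeping as in \theoref{theoSumSTIFS} and \propref{propExampleSumTriple}; the hypothesis $1<c^2N$ enters only to keep $(cN)^{-m}$ bounded in the commutator estimate and to make the spectral dimension coincide with the Hausdorff dimension.
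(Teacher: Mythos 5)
Your proof is correct and follows essentially the same route as the paper: the key identification $\DD_\omega=\tfrac{c^{-2m}}{N^m+1}(\fatR+N^m+2)$ (which the paper obtains by a direct orthonormal-basis computation on $\widetilde{A}^2_m(\B_m)$ rather than via your explicit dilation unitary onto $A^2_{N^m}(\B^1)$), the commutator identity from \propref{propCommutatorInAlg}, and the trace estimate $\sum_m (c^{2s}N^2)^m$ yielding spectral dimension $\log N/\log(1/c)$. The only difference is presentational: you carry out the verification of the hypotheses of \propref{propSumST} and the trace asymptotics in full (including the extra factor $c^m$ from the chain rule in the commutator bound, which the paper glosses over), whereas the paper simply observes that its $\alpha'_m,\beta'_m$ are equivalent to the sequences of \eqref{eqConditionEll} with $\ell=2$ and defers to the proof of \theoref{theoSumSTIFS}.
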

      
      \begin{proof}
	For any $m\in\N$, an orthonormal basis of $\widetilde{A}^2_m(\B_m)$ is given by
	\begin{align*}
	  \widetilde{u}_{m,j} (z) \vc c^{-m(N^m+j+1)} \big( \tfrac{(N^m+j+1)!}{N^m! \, j! \, \pi }\big)^{1/2}\, z^j\,, \quad z\in\B_{m}\,,
	\end{align*}
	and similarly as in \eqref{eqRelationsShift}, we have for any $\omega\in\{1,\dots,N\}^m$
	\begin{align*}
          \fatT_{z_j}^{(\omega)} &= c^m \fat{S}\big( \tfrac{\fatR+1}{\fatR+N^m+2} \big)^{1/2}\,, \quad \text{ and } \quad \DD_\omega = c^{-2m}\, \tfrac{\fatR + N^m +2}{N^m + 1} = \alpha'_m \fatR + \beta_m'\,,
        \end{align*}
        with $\alpha_m' \vc c^{-2m}(N^m-1)^{-1}$ and $\beta_m \vc c^{-2m}\tfrac{N^m+2}{N^m+1}$. These sequences are equivalent, as $m\to\infty$, to $(\alpha_m)_{m\in\N}$ and $(\beta_m)_{m\in\N}$ of \eqref{eqConditionEll} when $\ell = 2$. Since we assumed $1<c^2N$, the compactness of the resolvent of $\DD^\oplus$ and the computation of the spectral dimension are shown similarly as in the proof of \theoref{theoSumSTIFS}. Moreover, as in \propref{propCommutatorInAlg}, we have $\commut{\DD_\omega}{\fatT_{p\circ \iota_\omega}^{(\omega)}} = \alpha'_m \fatT_{(\fatR - \overline{\fatR})(p\circ \iota_\omega)}^{(\omega)}$ for any $m\in\N$ and $\omega\in\{1,\dots,N\}^m$, so $\norm{\commut{\DD_\oplus}{\fatT^\oplus_p}} \leq \sup_{m\in\N} \alpha_m' \norm{p}_\infty < +\infty$.
      \end{proof}
  
  \textbf{Acknowledgements:} The author expresses his gratitude to M. Engli\v s, B. Iochum and O. Gabriel for useful advices.
  
  %%%%%%%%%%%%%%
  % References %
  %%%%%%%%%%%%%%
  
  \phantomsection
  \bibliographystyle{apalike}

\end{document}